\documentclass{article}
\usepackage{PRIMEarxiv}
\usepackage[numbers,sort&compress]{natbib}
\PassOptionsToPackage{hypertexnames=false}{hyperref}
\usepackage[T1]{fontenc}
\usepackage[utf8]{inputenc}
\usepackage{amsmath,amssymb,mathtools,bm}
\usepackage{graphicx,booktabs,array,longtable,calc}
\usepackage{microtype}
\usepackage{xcolor}
\usepackage{capt-of}
\usepackage{placeins}
\usepackage{setspace}
\usepackage[font={footnotesize,stretch=1.0},labelfont=bf]{caption}
\usepackage{xurl}
\usepackage[hidelinks]{hyperref}
\usepackage[nameinlink,noabbrev]{cleveref}
\graphicspath{{figures/}}

\usepackage{tikz}

\usepackage{orcidlink}
\usepackage{xstring}
\usepackage{amsthm}

\newcommand{\ArxivNativeLabel}{}
\newcommand{\ArxivRefAdapter}[1]{%
  \IfBeginWith{#1}{SI-}{%
    \StrGobbleLeft{#1}{3}[\ArxivNativeLabel]%
    \ArxivOriginalRef{\ArxivNativeLabel}%
  }{%
    \ArxivOriginalRef{#1}%
  }%
}
\newcommand{\ArxivEqrefAdapter}[1]{%
  \IfBeginWith{#1}{SI-}{%
    \StrGobbleLeft{#1}{3}[\ArxivNativeLabel]%
    \ArxivOriginalEqref{\ArxivNativeLabel}%
  }{%
    \ArxivOriginalEqref{#1}%
  }%
}
\AtBeginDocument{%
  \let\ArxivOriginalRef\ref
  \let\ArxivOriginalEqref\eqref
  \let\ref\ArxivRefAdapter
  \let\eqref\ArxivEqrefAdapter
}

\newcommand{\BeginArxivSupplement}{%
  \let\ref\ArxivOriginalRef
  \let\eqref\ArxivOriginalEqref
  \clearpage
  \setcounter{section}{0}%
  \setcounter{subsection}{0}%
  \setcounter{equation}{0}%
  \setcounter{figure}{0}%
  \setcounter{table}{0}%
  \setcounter{page}{1}%
  \renewcommand{\thepage}{S\arabic{page}}%
  \renewcommand{\thesection}{S\arabic{section}}%
  \numberwithin{equation}{section}%
  \renewcommand{\theequation}{S\arabic{section}.\arabic{equation}}%
  \renewcommand{\thefigure}{S\arabic{figure}}%
  \renewcommand{\thetable}{S\arabic{table}}%
  \providecommand{\theHsection}{\arabic{section}}%
  \providecommand{\theHsubsection}{\arabic{section}.\arabic{subsection}}%
  \providecommand{\theHequation}{\arabic{equation}}%
  \providecommand{\theHfigure}{\arabic{figure}}%
  \providecommand{\theHtable}{\arabic{table}}%
  \renewcommand{\theHsection}{SI.\arabic{section}}%
  \renewcommand{\theHsubsection}{SI.\arabic{section}.\arabic{subsection}}%
  \renewcommand{\theHequation}{SI.\arabic{section}.\arabic{equation}}%
  \renewcommand{\theHfigure}{SI.\arabic{figure}}%
  \renewcommand{\theHtable}{SI.\arabic{table}}%
  \fancyhf{}%
  \fancyhead[L]{Supplementary Information}%
  \fancyhead[R]{B. Ni}%
  \cfoot{\thepage}%
  \renewcommand{\headrulewidth}{0pt}%
}

\newcommand{\ArxivSupplementTitle}{%
  \phantomsection
  \pdfbookmark[1]{Supplementary Information}{supplementary-information}%
  \thispagestyle{fancy}%
  \par
  \hrule height 2pt
  \vspace{0.25in}
  \begin{center}
    {\normalfont\LARGE Supplementary Information:\\
    Quantitative Eshelby stability: local and finite-shape theory through AI-assisted mechanics\par}
  \end{center}
  \vspace{0.29in}
  \hrule height 2pt
  \vspace{0.22in}
  \begin{center}
    {\large \orcidlink{0000-0003-2537-591X}\,\textbf{Bo Ni}$^{1,2,*}$\par}
    \vspace{0.55em}
    {\normalsize
    $^{1}$Department of Materials Science and Engineering, Carnegie Mellon University,\\
    5000 Forbes Ave., Pittsburgh, PA 15213, USA\\
    $^{2}$Generative Mechanics Laboratory (GenMech Lab), independent research initiative\\[0.45em]
    $^{*}$Correspondence: \href{mailto:nibo@cmu.edu}{nibo@cmu.edu}\par}
  \end{center}
  \vspace{0.35in}
}

\newcommand{\EndArxivSupplement}{}

\newtheorem{theorem}{Theorem}[section]
\newtheorem{lemma}[theorem]{Lemma}
\newtheorem{proposition}[theorem]{Proposition}

\title{\normalfont Quantitative Eshelby stability: local and finite-shape theory through AI-assisted mechanics}

\author{
  \orcidlink{0000-0003-2537-591X}\,\textbf{Bo Ni}$^{1,2,*}$ \\
  \\
  $^{1}$Department of Materials Science and Engineering, Carnegie Mellon University,\\
  5000 Forbes Ave., Pittsburgh, PA 15213, USA \\
  $^{2}$Generative Mechanics Laboratory (GenMech Lab), independent research initiative \\
  \\
  $^{*}$Correspondence: \href{mailto:nibo@cmu.edu}{nibo@cmu.edu}
}
\date{}

\hypersetup{
  pdftitle={Quantitative Eshelby stability: local and finite-shape theory through AI-assisted mechanics},
  pdfauthor={Bo Ni}
}

\renewcommand{\headrulewidth}{0pt}

\begin{document}
\maketitle

\begin{abstract}
How can human researchers use artificial intelligence (AI) to make progress on difficult mechanics problems while retaining control of physical significance, mathematical rigor, and validation? We explore this opportunity through a quantitative stability problem rooted in the Eshelby conjecture. For a uniform dilatational eigenstrain in a homogeneous isotropic solid, we establish a nonlinear two-sided relation between whole-interior strain heterogeneity and a geometry-only distance from the ellipsoid family, with half-order Sobolev scaling that captures morphological length scale. We then develop a complementary finite-shape theory: ellipsoidal subtraction yields an exact residual-field budget from which systematically convergent upper and lower bounds are obtained for a prescribed geometry. Applied to protruded, indented, and necked inclusions, the bounds agree with independent displacement-based finite-element calculations. The study also documents a human-led, AI-assisted research process in which researchers select consequential problems, shape targets, audit arguments, require independent checks, and interpret mechanics, while AI broadens analytical exploration and supports computation. This experience also led us to develop \emph{Open Problems in Mechanics}, a versioned community-facing map that makes selected mechanics questions, their known/open boundaries, and criteria for progress explicit.

\end{abstract}

\keywords{Inclusion theory \and Eshelby conjecture \and Quantitative stability \and Strain heterogeneity \and Artificial intelligence \and Human--AI collaboration}

\section{Introduction}
\label{sec:introduction}
Recent advances in artificial intelligence (AI) for scientific discovery, including mechanics, formal mathematical reasoning, and large-scale mathematical exploration, are expanding its role in research \cite{Wang2023AIDiscovery,NiBuehler2024,NiKaplanBuehler2024,Hubert2026,Georgiev2025}. For mechanics, this raises a timely question: how can researchers use increasingly capable AI to make progress on difficult theoretical problems while retaining control of physical significance, mathematical rigor, validation, and scientific understanding? Mechanics is a demanding setting because important problems often span general theory, quantitative computation, physical interpretation, and engineering use. Progress at one layer does not replace the others; increased AI capability likewise does not by itself ensure scientific understanding or verification \cite{MesseriCrockett2024,Hubert2026}.

The Eshelby inclusion problem provides a particularly suitable test. An ellipsoidal region carrying a uniform eigenstrain in an infinite homogeneous elastic solid has a spatially uniform constrained strain (Fig.~\ref{fig:overview}A) \cite{Eshelby1957}, a result foundational to micromechanics \cite{Parnell2016,Zhou2013Review}. It also motivated the Eshelby conjecture: to what extent does a uniform internal field characterize the ellipsoid? Rigidity results have resolved the weak conjecture in important settings and substantial parts of the strong conjecture in three dimensions \cite{KangMilton2008,Liu2008,Ammari2010StrongEshelby,YuanHuangWang2023}. Perturbation analyses have separately characterized infinitesimal uniformity-preserving directions and leading-order elastostatic changes under small interface perturbations \cite{Markenscoff1998a,Markenscoff1998b,Lagha2017}. The rigidity question is exact: when the field is uniform, what shapes are possible?

Real microstructures pose a quantitative counterpart. Precipitates and transforming regions can develop faceting, irregularity, coalescence, and interconnected morphologies as their size, coherency, interfacial energy, and elastic interactions evolve \cite{Marquis2001,Marquis2005,Karnesky2007,Lass2015}. Best-fit ellipsoids are useful in practice and can reproduce volume-averaged elastic quantities surprisingly well even for irregular inclusions, but the underlying nonellipsoidal geometry produces a heterogeneous internal field \cite{Karnesky2007,Zhong2021}. If such a morphology is represented by an ellipsoid, how much internal strain heterogeneity is neglected? Conversely, if the field is nearly uniform, how strongly is the nonellipsoidal shape constrained? The problem is not captured by a single geometric tolerance. Changes within the ellipsoid family preserve uniformity, equal-height features at different wavelengths need not have equal effects, and infinitesimal sensitivity alone does not control a finite morphology.

Nonellipsoidal inclusion mechanics has consequently been approached through exact or semi-analytical solutions for selected geometries and computational or averaged descriptions for more general shapes \cite{Rodin1996,Zou2010NonElliptical,Wu2023AverageEshelby}. These difficulties also expose the complementary roles of theory and computation. A theorem can control all admissible modes and reveal which geometric information is fundamentally required, something finite numerical sampling cannot establish. Shape-specific analysis can then turn that structure into quantitative predictions, while an independent elasticity calculation tests them through a different formulation. Engineering interpretation determines whether the resulting strain or stress variation matters for the approximation being used.

Here we develop these layers in one mechanics problem and use the research trajectory as a practical exploration of human-led, AI-assisted theoretical mechanics, not as a final prescription for scientific research. Human direction selected the problem, assumptions, targets, validation standards, and engineering interpretation; AI assistance broadened analytical search, computation, and checking (Fig.~\ref{fig:overview}B). We first establish a nonlinear two-sided stability relation near the ellipsoid family and identify the half-order boundary scale required for all-mode control (Fig.~\ref{fig:overview}C). We then derive a complementary finite-shape stability theory with systematically convergent bounds and test it on protruded, indented, and necked inclusions using independent finite-element calculations (Fig.~\ref{fig:overview}D). The Discussion considers what this process suggests for human-led, AI-assisted theoretical mechanics and introduces \emph{Open Problems in Mechanics} (Fig.~\ref{fig:overview}E), a community-facing map of selected theory-rich mechanics questions.

\begin{figure}[!htbp]
\centering
\includegraphics[width=\linewidth,keepaspectratio]{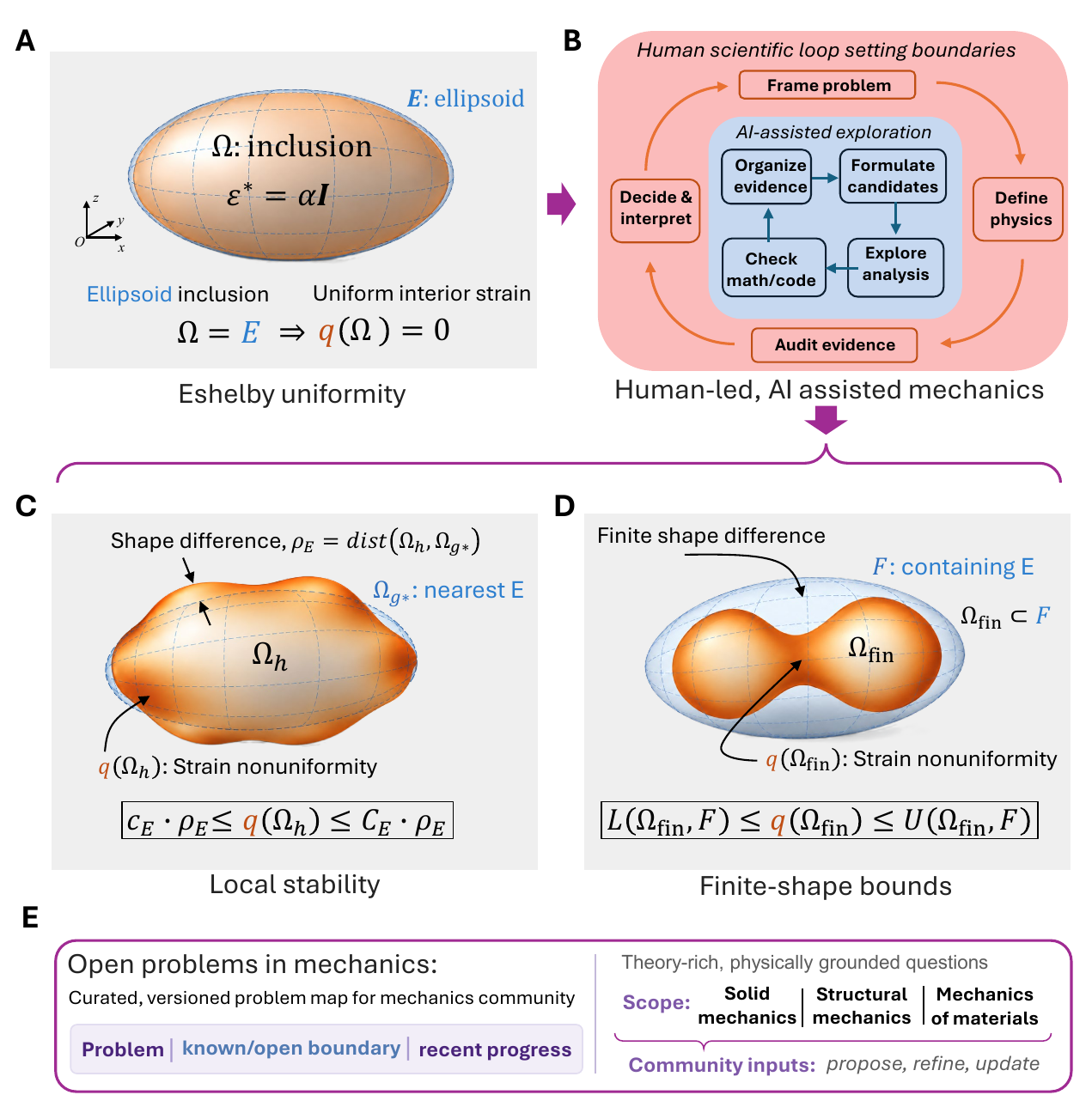}
\caption[From exact Eshelby uniformity to quantitative shape stability.]{\input{sections/caption_1}}
\label{fig:overview}
\end{figure}

\FloatBarrier

\section{Results}
\label{sec:results}
\subsection{Quantitative stability near ellipsoids}
\label{sec:quantitative-stability}
We consider a bounded inclusion \(\Omega\subset\mathbb R^3\) carrying a uniform dilatational eigenstrain \(\boldsymbol{\varepsilon}^{*}=\alpha\mathbf I\) in an infinite homogeneous isotropic solid. Inclusion and matrix have identical elastic moduli, the interface is perfectly bonded, and no remote load is applied. In this setting,

\begin{equation}
N_\Omega(\mathbf x)=\frac{1}{4\pi}\int_\Omega
\frac{d\mathbf y}{|\mathbf x-\mathbf y|},
\qquad
\boldsymbol{\varepsilon}
=-\alpha\beta\nabla^2N_\Omega ,
\tag{1}
\end{equation}

with

\begin{equation}
\beta=\frac{3K}{K+4\mu/3},
\tag{2}
\end{equation}

where \(K\) and \(\mu\) are the bulk and shear moduli. The Hessian \(\nabla^2N_\Omega\), the matrix of second spatial derivatives of the Newtonian potential \cite{DiFratta2016}, contains the complete shape dependence of the strain field.

We measure loss of uniformity by the normalized root-mean-square (RMS) strain fluctuation

\begin{equation}
q(\Omega)=
\frac{\|\boldsymbol{\varepsilon}-\langle\boldsymbol{\varepsilon}\rangle_\Omega\|_{L^2(\Omega)}}
{|\alpha|\beta\sqrt{|\Omega|}}
=
\frac{1}{\sqrt{|\Omega|}}
\min_{\mathbf M\in{\rm Sym}(3)}
\|\nabla^2N_\Omega-\mathbf M\|_{L^2(\Omega)} .
\tag{3}
\end{equation}

Here the minimization subtracts the best uniform symmetric tensor field before measuring the remaining volume-integrated variation. Because \(\Delta N_\Omega=-1\) inside the inclusion, the volumetric strain is constant for every admissible shape in this model, so \(q\) measures morphology-induced deviatoric heterogeneity.

Because every ellipsoid produces a uniform interior strain, the relevant geometric error is the departure from the ellipsoid family. To study this departure locally, fix a nondegenerate reference ellipsoid \(E\) and normalize lengths by its equivalent-volume radius. A nearby inclusion \(\Omega=\Omega_h\) is represented by displacing the boundary of \(E\) along its outward unit normal \(\mathbf n_E\),

\begin{equation}
\partial\Omega_h=
\{\mathbf x+h(\mathbf x)\mathbf n_E(\mathbf x):
\mathbf x\in\partial E\},
\tag{4}
\end{equation}

where \(h\) is the signed normal displacement. The reference \(E\) specifies the neighborhood of shapes under consideration. Within this neighborhood, we allow the approximating ellipsoid to vary: let \(\mathcal M_E\) contain the normal-height functions \(g\) whose graphs describe nearby exact ellipsoids, and define the half-order shape distance

\begin{equation}
\rho_E(h)=
\min_{g\in\mathcal M_E}\|h-g\|_{X_E},
\qquad
X_E\simeq H^{1/2}(\partial E).
\tag{5}
\end{equation}

Here \(H^{1/2}(\partial E)\) is the fractional Sobolev space of order one-half, and \(\simeq\) denotes equivalence of norms. The chosen norm \(\|\cdot\|_{X_E}\) measures both boundary amplitude and morphological length scale, giving greater weight to finer variations of the same amplitude. Its precise definition is given in \textbf{SI} Section~\ref{SI-pub:norms}, and Section 2.2 explains its mechanical significance. Minimizing over \(g\) removes changes in position, orientation, size, and aspect ratio that preserve an ellipsoidal shape, so \(\rho_E\) measures the remaining nonellipsoidal departure.

To define the local admissible neighborhood, we fix \(Z_E\) as a \(C^{1,1/2}\)-type norm on the normal height, controlling height, slope, and the half-H\"older variation of slope. Theorem~\ref{SI-pub:thm-local} in \textbf{SI} Section~\ref{SI-pub:local} is proved for any fixed \(C^{1,\gamma}\) class with \(0<\gamma<1\); \(\gamma=1/2\) is used here as a concrete convention. This strong norm controls geometric admissibility.

For each fixed reference ellipsoid \(E\), there exist an admissible neighborhood size \(\delta_0>0\) and positive finite constants \(c_E,C_E,A_E\) such that every perturbation in this neighborhood satisfies

\begin{equation}
\boxed{
(c_E-A_E\delta)\rho_E(h)
\le q(\Omega_h)
\le
(C_E+A_E\delta)\rho_E(h),
\qquad
\|h\|_{Z_E}\le\delta\le\delta_0 .
}
\tag{6}
\end{equation}

Here \(c_E\) and \(C_E\) are the optimal lower and upper response factors of the linearized problem after ellipsoidal changes are removed, while \(A_E\delta\) bounds the nonlinear correction. The neighborhood size \(\delta_0\) is chosen so that the lower coefficient remains positive.

Equation (6) controls the entire admissible shape class, including asymmetric features, localized imperfections, and combinations of variations across spatial scales (Fig.~\ref{fig:overview}C). For fixed \(E\) and norm conventions, the same constants apply throughout this class, independently of perturbation frequency, eigenstrain amplitude, and the elastic moduli after normalization by Eq. (3). The result therefore relates nonellipsoidal geometry to strain heterogeneity without prescribing a particular perturbation profile.

This class-wide control provides a basis for engineering tolerances. For a prescribed normalized RMS fluctuation allowance \(q_{\mathrm{allow}}\), every admissible shape satisfying \(\rho_E(h)\le q_{\mathrm{allow}}/(C_E+A_E\delta)\) meets that allowance. The lower bound provides complementary control: a prescribed minimum departure from the ellipsoid family guarantees a minimum strain heterogeneity throughout the admissible class. Conversely, achieving nearly uniform strain requires geometric proximity to that family. These implications support geometric specifications for limiting or generating strain heterogeneity across a broad range of admissible morphologies.

Section 2.2 gives explicit linearized constants for the sphere; for a general ellipsoid, their operator characterization is given in \textbf{SI} Section~\ref{SI-pub:theorem-statement}. A certified numerical tolerance at finite amplitude requires evaluated linearized bounds together with rigorous quantitative estimates of \(A_E\) and an admissible \(\delta_0\). These nonlinear quantities remain to be evaluated numerically.

The proof first identifies the boundary variations that produce zero linearized strain heterogeneity. These form the nine-dimensional tangent space of nearby ellipsoids: three translations of the center and six independent changes in the symmetric quadratic form defining the ellipsoid. The latter correspond to the six independent coefficients of a symmetric \(3\times3\) matrix, which determines the ellipsoid's axis lengths and orientation. This characterization agrees with earlier results showing that infinitesimal perturbations preserving uniform interior stress remain within the ellipsoid family \cite{Markenscoff1998a,Markenscoff1998b}. Linear control of the remaining directions supplies \(c_E\) and \(C_E\).

The nonlinear step fits an actual nearby ellipsoid and linearizes the field associated with the remaining nonellipsoidal deformation. After mapping the fields to the reference domain \(E\) and removing their uniform components, the normalized field remainder satisfies

\begin{equation}
\text{nonlinear error}
\le
C_{\mathrm{rem},E}\delta\rho_E(h).
\tag{7}
\end{equation}

Here the error is measured in the volume \(L^2\) norm on \(E\), and \(C_{\mathrm{rem},E}\) depends on the reference geometry and norm conventions. The factor \(\delta\) measures the overall perturbation size, while \(\rho_E(h)\) measures the residual departure from the ellipsoid family. This proportional dependence makes the remainder vanish for every nearby exact ellipsoid and controls it relative to the nonellipsoidal departure, completing the two-sided estimate. The complete formulation and proof are given in \textbf{SI} Sections~\ref{SI-pub:setting}--\ref{SI-pub:local}, with the relative remainder in Eq.~\eqref{SI-pub:eq-relative-remainder}; Methods Section~\ref{sec:method-local} summarizes the mechanical formulation and proof architecture.

\subsection{Morphological scale and sphere spectrum}
\label{sec:morphological-scale}
To interpret the half-order distance in Eq. (6), we first examine how morphological wavelength affects the strain response, then obtain explicit linearized bounds for the sphere. A broad protrusion and a fine corrugation can have the same height but generate different strain fields. Consider a locally flat reference interface \(z=0\), perturbed by

\begin{equation}
h(x)=a\cos(\kappa x),
\tag{8}
\end{equation}

where \(a\) is the amplitude and \(\kappa=2\pi/\lambda\) the wavenumber. Let \(\boldsymbol{\varepsilon}^{(1)}\) denote the strain perturbation linear in \(a\) induced by this boundary displacement. Figure~\ref{fig:scale-spectrum}A shows its magnitude \(|\boldsymbol{\varepsilon}^{(1)}|_F\) on the reference half-space \(z<0\). The subscript \(F\) denotes the Frobenius norm, the square root of the sum of squared tensor components. The strain perturbation decays into the inclusion as \(e^{\kappa z}\): its amplitude scales as \(|\alpha|\beta a\kappa\), while its penetration depth scales as \(\kappa^{-1}\). Direct integration gives

\begin{equation}
\int_{-\infty}^{0}
|\boldsymbol{\varepsilon}^{(1)}|_F^2\,dz
=
\frac{\alpha^2\beta^2a^2\kappa}{4}.
\tag{9}
\end{equation}

Thus the squared bulk response scales as amplitude squared times one power of wavenumber. A height RMS weights all wavelengths equally, while a slope RMS weights them by \(\kappa^2\). The intermediate high-frequency scaling is the homogeneous half-order Sobolev seminorm,

\begin{equation}
|h|_{\dot H^{1/2}}^2
\sim
\int |\boldsymbol{\xi}|\,
|\widehat h(\boldsymbol{\xi})|^2\,d\boldsymbol{\xi},
\tag{10}
\end{equation}

which combines the stronger near-interface field of finer morphology with its shorter penetration depth. For a general ellipsoid, mapping the residual shape to a sphere gives the equivalent modal form

\begin{equation}
\|h\|_{X_E}^2
\sim
\sum_{\ell\ge0}(\ell+1)A_\ell^2 ,
\tag{11}
\end{equation}

where \(A_\ell\) is the RMS amplitude of angular degree \(\ell\). The order \(1/2\) is not only sufficient. For smooth increasingly fine perturbations measured in an \(H^s\)-type norm,

\begin{equation}
\frac{q(\Omega_{h_\ell})}{\rho_{E,s}(h_\ell)}
\asymp
\ell^{1/2-s}.
\tag{12}
\end{equation}

Hence a frequency-independent upper bound fails for \(s<1/2\), while a positive frequency-independent lower bound fails for \(s>1/2\). The complete optimal-order argument is given in \textbf{SI} Proposition~\ref{SI-pub:prop-sharpness}, with the tangent-plane calculation in Section~\ref{SI-pub:plane}.

The spherical case makes this scale dependence explicit. After the best ellipsoidal directions are removed, write the residual height in spherical harmonics of degree \(\ell\ge3\). Degrees 0, 1, and 2 represent scaling, translation, and infinitesimal ellipsoidal changes. The exact linear spectrum is

\begin{equation}
\boxed{
q_{\mathrm{lin}}^2
=
3\sum_{\ell=3}^{\infty}
\frac{\ell(\ell-1)}{2\ell+1}A_\ell^2 .
}
\tag{13}
\end{equation}

At fixed height RMS, finer modes give larger bulk heterogeneity; for example, degree 12 produces about 2.48 times the RMS response of degree 3 at the same height. At fixed RMS slope, however, the amplitude of fine modes must decrease roughly as \(1/\ell\), reversing the ranking. There is therefore no wavelength-independent statement that fine or broad imperfections are intrinsically more severe without specifying the geometric constraint.

Using the half-order distance

\begin{equation}
\rho_{\mathrm{lin}}^2
=
\sum_{\ell\ge3}(\ell+1)A_\ell^2 ,
\tag{14}
\end{equation}

both geometry and mechanical response scale as \(\ell A_\ell^2\) at high degree, yielding the sharp all-mode bounds

\begin{equation}
\boxed{
\frac{3}{\sqrt{14}}\rho_{\mathrm{lin}}
\le
q_{\mathrm{lin}}
\le
\sqrt{\frac32}\rho_{\mathrm{lin}} .
}
\tag{15}
\end{equation}

Figure~\ref{fig:scale-spectrum}B compares the same linearized response divided by three different geometric measures, rather than three distinct responses of one shape. The half-order metric does not make all modes mechanically identical; it removes the divergent wavelength dependence that prevents two-sided control. It remains geometry-only rather than embedding the elastic spectrum into the metric. For sufficiently small finite perturbations, the modal prediction acquires the controlled nonlinear correction of Eq. (7), linking this explicit sphere result back to the general theorem. The complete spherical derivation is given in \textbf{SI} Section~\ref{SI-pub:sphere}; the finite-amplitude checks are summarized in Methods Section~\ref{sec:method-sphere} and detailed in \textbf{SI} Section~\ref{SI-pub:mode-checks}.

\begin{figure}[!htbp]
\centering
\includegraphics[width=\linewidth,keepaspectratio]{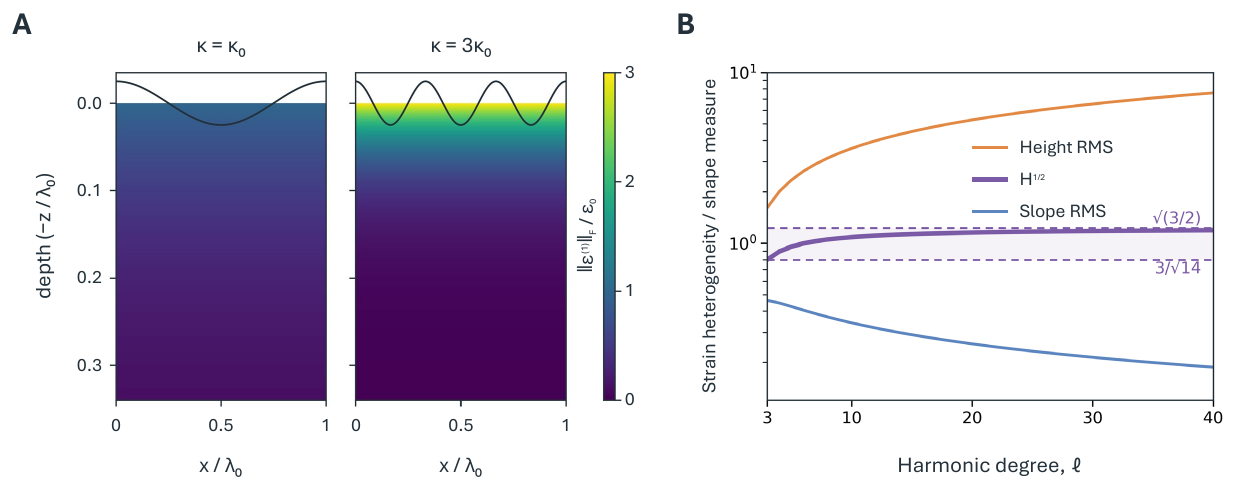}
\caption[Morphological scale and half-order control of strain heterogeneity.]{\input{sections/caption_2}}
\label{fig:scale-spectrum}
\end{figure}

\FloatBarrier

\subsection{Finite-shape stability of Eshelby uniformity}
\label{sec:finite-shape-stability}
Section 2.1 gives a rigorous local stability statement over an infinite-dimensional neighborhood of the ellipsoid family. A complementary question arises for a prescribed finite geometry that may lie outside that neighborhood: how tightly can its loss of uniformity be bounded, and can those bounds converge to the exact response? This second stability problem uses the same exact elasticity-to-potential reduction but does not require the local theorem.

Choose an ellipsoid \(F\) containing \(\Omega\) (Fig.~\ref{fig:overview}D), with \(V=|\Omega|\) and \(W=|F|\), and define the volume-balanced residual source

\begin{equation}
g=\chi_\Omega-\frac{V}{W}\chi_F .
\tag{16}
\end{equation}

Here \(\chi_D\) denotes the indicator function of a region \(D\): it equals one inside \(D\) and zero outside.

Because an ellipsoid has a constant potential Hessian throughout its interior, subtracting the scaled ellipsoidal source changes only the uniform field inside \(\Omega\) and leaves \(q\) unchanged. Define

\begin{equation}
\mathbf G=\operatorname{dev}\nabla^2N_g .
\tag{17}
\end{equation}

The volume balance makes the total squared residual field known exactly:

\begin{equation}
\boxed{
\|\mathbf G\|_{L^2(\mathbb R^3)}^2
=
\frac{2V}{3}\left(1-\frac{V}{W}\right)
\equiv\mathcal B .
}
\tag{18}
\end{equation}

Thus, before resolving the detailed field, the geometry fixes a global residual-field budget \(\mathcal B\). The key step is to partition this budget according to mechanical role,

\begin{equation}
\boxed{
\mathcal B
=
Vq^2
+
\frac{\left|\int_\Omega\mathbf G\,dV\right|_F^2}{V}
+
\int_{\Omega^c}|\mathbf G|_F^2\,dV .
}
\tag{19}
\end{equation}

The three terms are the desired internal heterogeneity, the squared internal mean residual field, and residual energy outside the inclusion.

This exact accounting identity produces two-sided bounds from partial field information. For any resolved interior core \(S\subset\Omega\), let \(A_S=\int_S|\mathbf G|_F^2dV\), \(\mathbf J_S=\int_S\mathbf G\,dV\), and \(v_S=|S|\). Then

\begin{equation}
\boxed{
Vq^2
\ge
A_S-\frac{|\mathbf J_S|_F^2}{v_S}.
}
\tag{20}
\end{equation}

For a resolved exterior region \(T\subset\Omega^c\), let \(E_T=\int_T|\mathbf G|_F^2dV\), \(v_R=V-v_S\), and \(R_E=\mathcal B-A_S-E_T\). Cauchy-Schwarz bounds the unresolved contribution to the internal mean, giving

\begin{equation}
\boxed{
Vq^2
\le
\mathcal B-E_T
-
\frac{1}{V}
\max\!\left(
0,
|\mathbf J_S|_F-\sqrt{v_RR_E}
\right)^2 .
}
\tag{21}
\end{equation}

The construction is morphology-independent: protrusions, indentations, necks, or more complicated prescribed shapes enter through the same residual source and field integrals. More importantly, the gap is not intrinsic. For nested regions \(S_N\uparrow\Omega\) and \(T_N\uparrow\Omega^c\), where \(\uparrow\) denotes increasing sequences that exhaust the respective domains up to sets of zero volume, let \(\ell_N\) and \(u_N\) denote the raw lower and upper endpoints from Eqs. (20) and (21). Exact integration gives \(\ell_N\to q\) and \(u_N\to q\). Defining the cumulative envelopes \(L_N=\max_{k\le N}\ell_k\) and \(U_N=\min_{k\le N}u_k\) makes the tightening monotone:

\begin{equation}
\boxed{
L_N\uparrow q,
\qquad
U_N\downarrow q .
}
\tag{22}
\end{equation}

The lower bounds therefore increase and the upper bounds decrease toward \(q\). The finite-shape theory is asymptotically exact for a fixed geometry. Section 2.1 supplies universal all-mode control near ellipsoids; Section 2.3 sacrifices that universality to obtain systematically improvable finite-amplitude bounds for one geometry. At the continuum level Eqs. (18)--(22) are exact; their complete derivation is given in \textbf{SI} Theorem~\ref{SI-pub:thm-finite} and Section~\ref{SI-pub:finite-convergence}. The reported numerical endpoints below are working numerical evaluations with fixed-region quadrature checks; full computer-assisted certification additionally requires outward numerical enclosure of the field integrals. The application geometries and numerical enclosure protocol are summarized in Methods Section~\ref{sec:method-finite-shape} and detailed in \textbf{SI} Sections~\ref{SI-pub:geometries}--\ref{SI-pub:intervals}.

\subsection{Finite-shape applications and finite-element comparison}
\label{sec:finite-shape-validation}
We apply the finite-shape theory to controlled morphologies motivated by irregular and coalescing precipitates: real precipitates can depart substantially from simple spheroids and develop interconnected or necked configurations during coalescence, while ellipsoidal fitting remains a useful reduced geometric description \cite{Marquis2005,Karnesky2007,Vogel2015}. A prolate spheroid of aspect ratio 1.5 provides the exact reference (Case A, \(q=0\)); \(R_0=(3V_A/4\pi)^{1/3}\) is the radius of the sphere having its volume \(V_A\). The protrusion (Case B) and indentation (Case C) are constructed by adding and subtracting the same axial height profile over the same polar footprint. Their maximum axial height/depth is \(0.35R_0\). The necked particle (Case D) is an equal-volume connected two-lobe shape with neck-to-lobe radius ratio \(r_0/r_1=0.60\), not treated as a small perturbation of one ellipsoid. Figure~\ref{fig:finite-validation}A presents these geometries in the order protrusion, indentation, and necked particle. The analytic profiles and containing ellipsoids are specified in Methods Section~\ref{sec:method-finite-shape} and \textbf{SI} Section~\ref{SI-pub:geometries}.

All three cases use the same inequalities and refinement rule. The inclusion and its containing ellipsoid remain fixed. At refinement level \(N=0,\ldots,4\), the omitted volume fraction is \(\tau_N=0.1\,2^{-N}\) on each side of the interface, measured relative to the inclusion volume \(|\Omega|\). The interior core is \((1-\tau_N)^{1/3}\Omega\) and the resolved exterior begins outside \((1+\tau_N)^{1/3}\Omega\). Each level halves the omitted volume on either side of the interface. The complete far field is included through a surface integral on a fixed enclosing sphere; fixed-region quadrature checks are performed separately from this geometric refinement. At the finest level, \(N=4\) and \(\tau_4=0.625\%\), the working numerical intervals are

\begin{equation}
\boxed{\begin{aligned}
0.051989&\le q_B\le0.082406,\\
0.075409&\le q_C\le0.077286,\\
0.154127&\le q_D\le0.167759.
\end{aligned}}
\tag{23}\label{eq:finite-intervals}
\end{equation}

The interval widths in \(q\) decrease by \(81.3\%\), \(91.6\%\), and \(88.5\%\) from the first to the final level, respectively (Fig.~\ref{fig:finite-validation}C). The same omitted-volume fractions make the refinement comparable across geometries, without assuming that equal-volume layers contain equal field energy. The gaps \(u_N^2-\ell_N^2\) show contraction approaching first order over the computed range; the observed orders and numerical checks are reported in \textbf{SI} Sections~\ref{SI-pub:dyadic}--\ref{SI-pub:intervals}.

We use the finite element method (FEM) to compute \(q\) independently for the three geometries. All three values lie within the predicted intervals. These displacement-based calculations solve the equilibrium equations,

\begin{equation}
\nabla\cdot\boldsymbol{\sigma}=0,
\qquad
\boldsymbol{\sigma}=\mathbb C:\left(\boldsymbol{\varepsilon}-\alpha\mathbf I\chi_\Omega\right).
\tag{24}\label{eq:fem-equilibrium}
\end{equation}

Sphere and spheroid controls give residual \(q\) values of \(2.76\times10^{-5}\) and \(2.10\times10^{-5}\), establishing the numerical floor. The FEM responses, with the mesh and domain checks reported in \textbf{SI} Section~\ref{SI-pub:fem-checks}, are

\begin{equation}
q_B^{\rm FEM}=0.052435,
\qquad q_C^{\rm FEM}=0.076254,
\qquad q_D^{\rm FEM}=0.154657.
\tag{25}\label{eq:fem-values}
\end{equation}

At the finest level, the lower endpoints reach \(99.2\%\), \(98.9\%\), and \(99.7\%\) of those reference values. The protrusion retains the largest relative upper allowance; the indentation is tightly bracketed, and the necked particle has a smaller but still visible upper allowance. This asymmetry is consistent with the bound construction: the lower endpoint measures resolved interior variation, whereas the upper endpoint also depends on exterior information and the estimate of the interior mean. The FEM formulation and convergence checks are summarized in Methods Section~\ref{sec:method-fem} and \textbf{SI} Sections~\ref{SI-pub:fem}--\ref{SI-pub:fem-checks}.

The FEM fields provide the mechanical interpretation (Fig.~\ref{fig:finite-validation}B). The protrusion produces variation around its polar feature and attachment, the indentation concentrates variation around its rim, and the necked particle develops a broader heterogeneous region through the constriction. Adding and removing the same-sized feature therefore give different responses, while the volume-preserving necked shape produces the largest heterogeneity. Changed volume or fitted aspect ratio alone cannot quantify loss of Eshelby uniformity; the spatial redistribution of the elastic field matters.

For specified \(\alpha\), \(K\), and \(\mu\), each dimensionless interval gives RMS strain and stress fluctuation estimates through \(|\alpha|\beta q\) and \(2\mu|\alpha|\beta q\).

For a fixed inclusion geometry, the continuum bounds converge to the exact heterogeneity as progressively more of the interior and exterior is resolved. Their gap can therefore be made arbitrarily small, allowing the refinement level to be chosen according to the accuracy required by an engineering application. For illustration, consider a normalized RMS allowance \(q_{\mathrm{allow}}=0.08\). The evaluated upper bound for the indentation (Case C) lies below this allowance, while the lower bound for the necked particle (Case D) exceeds it. The protrusion interval (Case B) straddles the allowance, motivating further refinement of the bounds for this same geometry.

\begin{figure}[!htbp]
\centering
\includegraphics[width=0.97\linewidth,keepaspectratio]{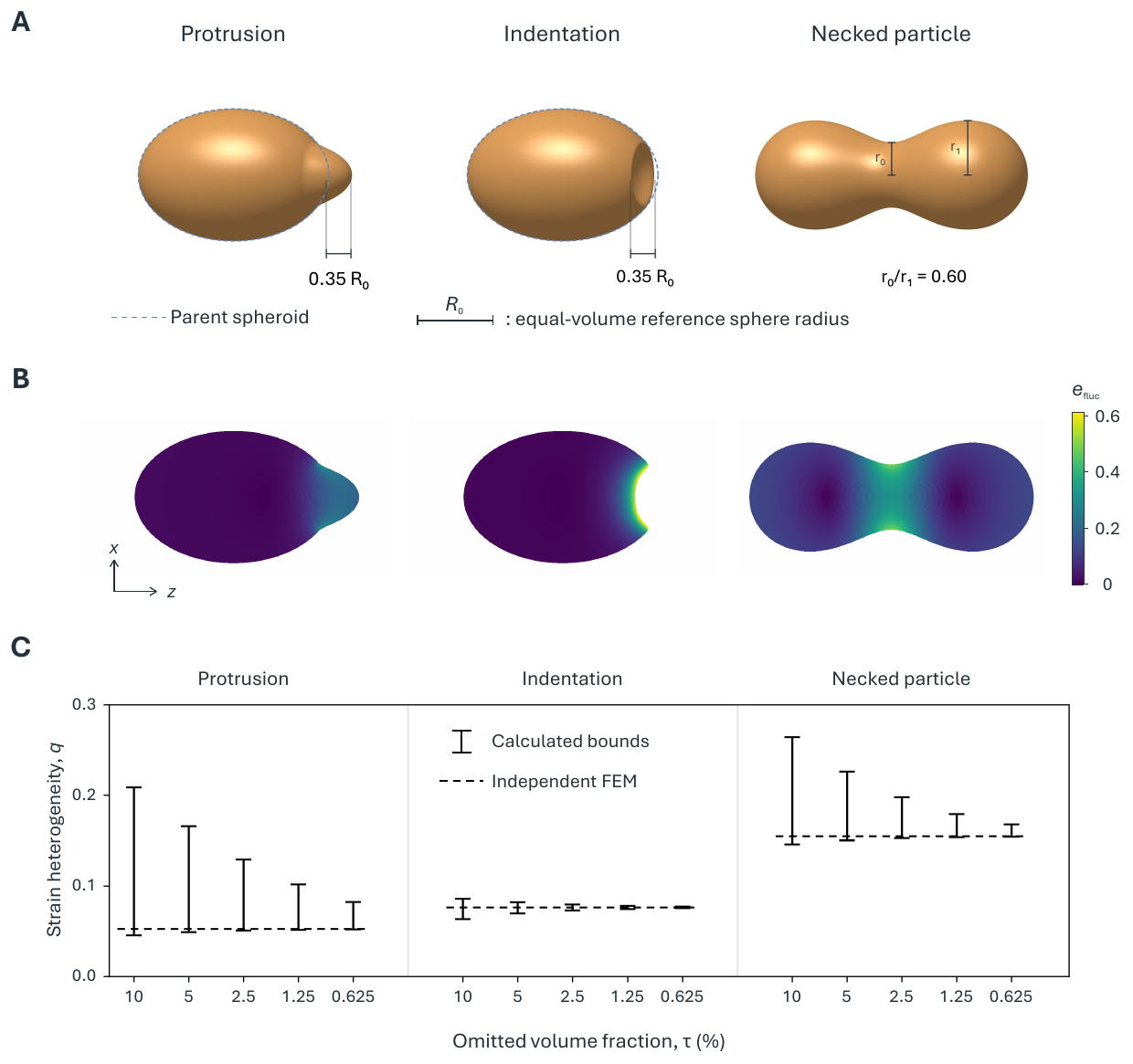}
\caption[Finite-shape bounds and independent finite-element comparison.]{\input{sections/caption_3}}
\label{fig:finite-validation}
\end{figure}

\FloatBarrier

\section{Discussion}
\label{sec:discussion}
\subsection{Two notions of Eshelby stability}
\label{sec:two-stabilities}

The two theoretical results address different notions of stability. Section 2.1 provides two-sided control of strain heterogeneity throughout an admissible class of shapes near any fixed nondegenerate ellipsoid. The same constants apply to all admissible perturbations and their combinations, with geometric departure measured relative to the nearby ellipsoid family. The half-order shape distance captures both perturbation amplitude and spatial scale, providing a basis for geometric specifications that limit or generate strain heterogeneity across this entire class. It should be noted that this correspondence is local in shape space. Related two-dimensional inclusion examples show that almost uniform fields can also occur in strongly nonelliptical shapes \cite{KangKimMilton2008}. The finite-shape results provide a complementary route for prescribed geometries beyond this neighborhood, with refinable bounds on their strain heterogeneity.

For a prescribed inclusion geometry, the finite-shape results provide a quantitative basis for assessing whether its strain heterogeneity meets an engineering tolerance. The exact elastic reduction and residual-field budget give upper and lower bounds that converge to the exact heterogeneity as interior and exterior field information is resolved. Provided the geometry is known to arbitrary accuracy, these continuum bounds can be refined to any prescribed accuracy for the same geometry, allowing further refinement when an initial interval is inconclusive. For shapes approaching an ellipsoid, the finite-shape response is consistent with the local asymptotic theory.

The applications show why both viewpoints are useful. Cases B and C have the same maximum height/depth and footprint but different heterogeneity, while the necked shape D preserves the reference volume and produces the largest response. Volume change, maximum height, or fitted aspect ratio therefore cannot by themselves quantify loss of uniformity. The observable \(q\) gives this comparison a mechanical meaning by measuring the normalized RMS error of the best uniform-strain approximation within the inclusion. Under the present assumptions, it also determines RMS stress fluctuations and the fluctuation contribution to the interior elastic energy, \(U_{\mathrm{fluct}}=\mu|\Omega|(|\alpha|\beta)^2q^2\) (\textbf{SI} Section~\ref{SI-pub:mechanics}). The bounds therefore support engineering tolerances for these quantities and for criteria quantitatively related to them. Further estimates for peak stresses and for the difference between the actual mean strain and the uniform strain predicted by a fitted ellipsoid would address additional engineering criteria.

Together, these results connect Eshelby's exact uniform-field property to both rigorous class-wide control and engineering-level prediction. The distinction also illustrates a broader feature of mechanics: general theory, shape-specific computation, and independent computational comparison answer different questions, and their value is greatest when they are designed to reinforce rather than substitute for one another.

\subsection{Human-led, AI-assisted theoretical mechanics}
\label{sec:ai-mechanics}

Recent agentic systems in mechanics and materials have combined specialized language-model agents with knowledge retrieval, coding, physics-based simulation, execution feedback, critique, and adaptive team organization across mechanics problem solving and materials-design workflows \cite{NiBuehler2024,AlloyGen2026,SciAgents2025,AtomAgents2025,ALLFEM2026}.

The present study explored a human-led, AI-assisted research structure in which scientific decision authority remained with the researchers, while AI enlarged the space of candidate reasoning (Fig.~\ref{fig:overview}B). Human researchers selected the mechanics problem, observables, assumptions, acceptance criteria, and validation requirements, and retained responsibility for interpretation and final claims. AI supported literature and cross-domain search, alternative formulations and derivations, symbolic and numerical checks, coding, partial formalization, and parallel exploration. AI-generated results were treated as candidates until reviewed through complementary checks designed to expose different failure modes (Table 1); the research process and AI review protocol are summarized in Methods Section~\ref{sec:method-ai}, with supporting details in \textbf{SI} Section~\ref{SI-pub:process}.

\begin{table}[htbp]
\centering
\small
\caption{Roles, authority, and checks in the human-led, AI-assisted research structure.}
\setlength{\tabcolsep}{3pt}
\begin{tabular}{@{}>{\raggedright\arraybackslash}p{0.18\linewidth}>{\raggedright\arraybackslash}p{0.25\linewidth}>{\raggedright\arraybackslash}p{0.22\linewidth}>{\raggedright\arraybackslash}p{0.27\linewidth}@{}}
\toprule
Research function & Human role and authority & AI contribution & Review before acceptance \\
\midrule
Problem formulation & Select the question, observable, assumptions, and target; decide when the formulation must change & Survey literature and related mathematics; propose alternatives & Physical consistency, limiting cases, and known results \\
Theory development & Maintain physical meaning and research direction; reject insufficient routes & Develop derivations, explore parallel routes, and perform symbolic/numerical tests & Human audit; cross-checks; staged reconstruction by a fresh-context AI reviewer given the definitions and target but not the prior conversation or candidate proof \\
Verification and validation & Define what counts as independent evidence and decide whether a claim is accepted & Reproduce calculations, test identities, support formalization and numerical implementation & Partial Lean formalization; independent numerical routes; displacement-based FEM independent of the potential formulation \\
Interpretation & Determine mechanics significance, scope, limitations, and final claims & Organize results and compare alternatives & Human final audit against the physical question and evidence \\
\bottomrule
\end{tabular}
\end{table}

Several episodes show why this structure mattered. Measuring shape error from one reference ellipsoid was rejected because another exact ellipsoid can differ geometrically while retaining uniformity; the target therefore became distance from the ellipsoid family. Sharp linearized gains were not promoted directly to finite-amplitude bounds when finite-shape calculations exposed nonlinear corrections. Agreement between two implementations sharing the same potential representation was treated as a consistency check rather than independent evidence, motivating the displacement-based FEM test. These cases illustrate that AI was most useful for broadening and sustaining reasoning, while human judgment determined which question mattered, which result was insufficient, and what evidence was required before acceptance. This separation between generative breadth and scientific acceptance is important more generally: greater AI-enabled productivity does not by itself guarantee scientific understanding, while effective use in materials research still depends on domain grounding, hypothesis testing, and explicit evaluation aligned with scientific goals \cite{MesseriCrockett2024,MiretKrishnan2025,ZhangEtAl2025ScientificMethod}. This research configuration is one explored case rather than a fixed prescription; the next section considers how such effort might be directed toward a broader set of mechanics problems.

\subsection{From one problem to a mechanics problem map}
\label{sec:problem-map}

Mechanics is a particularly demanding setting for open-problem formulation because a useful question must connect a physical phenomenon or engineering challenge, a mathematical model, a precise unresolved statement, and the physical or engineering implications of resolving it. This coupling is also what makes mechanics attractive for increasingly capable AI-assisted reasoning. More analytical search or computation is valuable only if the problem being explored remains physically meaningful and if progress can be distinguished from reformulation, approximation, or numerical evidence. The present study illustrates this directly: substantial effort was required not only to derive results, but to decide what quantity should be controlled, how shape variation should be measured relative to the ellipsoid family, when linearized information was insufficient, and what constituted independent evidence.

To make such boundaries explicit beyond one problem, we developed \emph{Open Problems in Mechanics} (OPM) (Fig.~\ref{fig:overview}E), a versioned public map of selected theory-rich questions in solid mechanics, structural mechanics, and mechanics of materials. The collection is publicly available at \url{https://generativemechanicslab.github.io/mechanics-open-problems/} \cite{Ni2026OPM}. OPM welcomes corrections and extensions to individual entries, as well as proposals for new problems, from the mechanics community. Its initial version-1.0 collection contains ten problems (Table 2). Problems were retained when they combined a substantial theoretical foundation, relevance across a broader family of physical or engineering systems, a genuinely unresolved core after literature review, and a target precise enough that progress could be recognized. Each entry records the physical question, mathematical or theoretical kernel, what is established, recent progress, the current known/open boundary, and what would constitute meaningful progress or settlement. Problems were revised, superseded, or held out when the residual target could not yet be stated sharply. The curation and versioning procedure is summarized in Methods Section~\ref{sec:method-opm}. Building on this curation and review, we plan to invite topic specialists to examine individual entries in greater depth, refine their formulation and literature coverage, and identify concrete next milestones. Publication status, including preprints and peer-reviewed papers, will be recorded separately from the scope of independent verification.

\begin{table}[htbp]
\centering
\small
\caption{First-release problem map in \emph{Open Problems in Mechanics}.}
\setlength{\tabcolsep}{3pt}
\begin{tabular}{@{}>{\raggedright\arraybackslash}p{0.08\linewidth}>{\raggedright\arraybackslash}p{0.59\linewidth}>{\raggedright\arraybackslash}p{0.25\linewidth}@{}}
\toprule
ID & Open question & Mechanics theme \\
\midrule
EL01 & How stable is Eshelby uniformity to shape imperfections? & Inclusion mechanics; quantitative stability \\
FR01 & What determines the attainable speed and first instability of a tensile crack? & Dynamic fracture \\
DI02 & How does continuum plasticity emerge from collective dislocation dynamics? & Dislocations; multiscale plasticity \\
PL01 & What selects the physical evolution after plasticity loses ellipticity? & Localization; post-instability selection \\
NL01 & Which singularities are physically admissible in finite elasticity? & Nonlinear elasticity; singularities \\
SH01 & What sets the safe buckling load of an imperfect cylindrical shell? & Structural stability; imperfection sensitivity \\
CO01 & When does three-dimensional elastic contact with Coulomb friction admit a quasistatic solution? & Contact and friction \\
S01 & What randomness survives in macroscopic fracture of a random solid? & Stochastic fracture; homogenization \\
AM02 & Can a Griffith-type fracture law be defined for a solidifying metal? & Solidification fracture; manufacturing \\
BIO01 & How does structural hierarchy transform local failure into macroscopic fracture resistance? & Biomechanics; hierarchical fracture \\
\bottomrule
\end{tabular}
\end{table}

The format draws on a long mathematical tradition in which explicitly formulated problem lists help organize unresolved questions and define recognizable targets for progress, from Hilbert's problems to later curated lists such as Smale's \cite{Hilbert1902,Smale1998}; modern resources continue this practice through maintained, updateable collections \cite{AIMProblemLists2026,OpenProblemGarden2026}. More recently, AI-assisted mathematical systems have produced new results on established problems by coupling generative search with explicit evaluators or formal verification, reinforcing the importance of separating candidate generation from verification \cite{RomeraParedes2024,Hubert2026,Georgiev2025}. The 2026 forced Navier--Stokes blow-up development and subsequent human-readable reconstruction provide a recent example in which result generation, verification, and readable mathematical synthesis became distinct stages \cite{OpenAI2026NavierStokes,LeiRen2026}. Mechanics requires an additional layer: the theoretical statement must remain tied to the physical model and to the engineering meaning of its resolution. The present Eshelby problem provides one example. This work advances quantitative stability in a controlled isotropic, matched-modulus setting, while broader questions involving general eigenstrains, mismatch, anisotropy, interfaces, multiple inclusions, and larger finite departures remain open. OPM records such progress by moving the documented boundary of a problem rather than treating a partial result as closure. We therefore view the map as a practical interface between human problem selection, established mechanics knowledge, and expanding AI-assisted reasoning, whose usefulness will depend on continued critical updating by the mechanics community.

\section{Conclusions}
\label{sec:conclusions}
Eshelby’s classical result shows how a special geometry can reduce a nontrivial elastic field problem to a uniform interior response. Here we asked what remains when that geometry is imperfect. Near the ellipsoid family, we establish a quantitative two-sided relation between nonellipsoidal shape variation and interior strain heterogeneity, with a half-order geometry scale that makes morphological length scale part of the mechanics. The spherical spectrum makes this dependence explicit, while the finite-shape theory carries the same question to prescribed nonellipsoidal morphologies. Together, these results move Eshelby theory from exactness to robustness.

That robustness also acquires a practical meaning. By separating the ellipsoidal uniform-field contribution from the residual geometry, the finite-shape formulation gives tightening bounds on the heterogeneity generated by a known morphology, consistent with independent displacement-based FEM. The relevant engineering question is therefore not only whether a shape is close to an ellipsoid, but how much strain or stress variation an ellipsoidal approximation neglects. The present results also point to a broader future program: testing quantitative Eshelby stability for general eigenstrain modes, elastic mismatch, and anisotropy, where the equivalent-inclusion and uniformity structures already become substantially richer \cite{BarnettCai2018,YuanHuangWang2022}; for interfaces and interacting inclusions, where the geometries supporting uniform internal fields can change qualitatively \cite{DaiGao2025,DaiGaoRu2015}; and for larger finite departures from ellipsoidal geometry. Extensions to finite-deformation and nonlinear inclusion mechanics provide a further direction \cite{LiEtAl2022}.

The study further provides one documented practice of human-led, AI-assisted theoretical mechanics. AI broadened analytical and computational exploration, while human researchers retained authority over the physical question, acceptance criteria, independent evidence, and interpretation. To extend this practice beyond the present case, we developed \emph{Open Problems in Mechanics} as a public, versioned map of selected theory-rich questions with their physical motivation, established results, known/open boundaries, and criteria for progress made explicit. The aim is not to define a closed list of grand challenges, but to provide a shared resource that can be corrected, sharpened, and advanced by the mechanics community. As AI expands the scale of available reasoning, making consequential questions and evidentiary standards explicit may become increasingly important for turning that capacity into mechanics advances that remain physically meaningful, checkable, and transferable.

\section{Materials and Methods}
\label{sec:methods}

\subsection{Mechanical formulation and local stability theorem}
\label{sec:method-local}

The elastic tensor is identical throughout \(\mathbb R^3\) and acts on symmetric tensors as
\begin{equation*}
\mathbb C\mathbf A=2\mu\mathbf A+\lambda\,\mathrm{tr}(\mathbf A)\mathbf I,
\qquad
\mu>0,\quad K=\lambda+2\mu/3>0 .
\end{equation*}
The eigenstrain is \(\alpha\mathbf I\chi_\Omega\), and the disturbance displacement belongs to \(\mathcal D^{1,2}(\mathbb R^3;\mathbb R^3)\). Weak equilibrium gives the Newtonian representation in Eqs. (1)--(2), and Eq. (3) defines the normalized whole-interior heterogeneity.

The main text fixes a \(C^{1,1/2}\)-type normal-graph norm \(Z_E\) for geometric admissibility; the \textbf{SI} proves the theorem for every fixed \(C^{1,\gamma}\), \(0<\gamma<1\). The shape error itself is measured in the geometry-only \(H^{1/2}\)-type norm \(X_E\). The proof constructs a whole-space coordinate lift, differentiates the shape-to-field map in the energy space, identifies the nine ellipsoidal null directions, proves transverse upper/lower control, fits an actual nearby ellipsoid, and bounds the registered nonlinear remainder by \(C_{\mathrm{rem},E}\delta\rho_E(h)\). The complete proof and its analytical inputs are given in \textbf{SI} Sections~\ref{SI-pub:setting}--\ref{SI-pub:scale-spectrum}; the relative nonlinear estimate is Eq.~\eqref{SI-pub:eq-relative-remainder}.

\subsection{Spherical spectrum and finite-amplitude checks}
\label{sec:method-sphere}

Real spherical harmonics are RMS-orthonormal on \(S^2\). Differentiating the linearized Newtonian single-layer potential and integrating over the ball gives Eq. (13); the complete spectral derivation is given in \textbf{SI} Section~\ref{SI-pub:sphere}. Symbolic checks verify the constant-Hessian subtraction, harmonic orthogonality, and modal coefficients. Selected finite-amplitude radial graphs, together with exact sphere and spheroid controls and a separate direct surface-integral implementation, test the asymptotic spectrum without entering the proof; these checks are detailed in \textbf{SI} Sections~\ref{SI-pub:potential-numerics}--\ref{SI-pub:mode-checks}.

\subsection{Finite-shape enclosure calculation}
\label{sec:method-finite-shape}

For each prescribed inclusion \(\Omega\), a containing ellipsoid \(F\) is selected from geometry alone and then held fixed. Cases B and C add and subtract the same compact axial height profile on the upper meridian of the aspect-ratio-1.5 reference spheroid, with maximum height/depth \(0.35R_0\) and footprint radius \(0.55b\); the opposite meridian is unchanged. Case D is a reference-volume-preserving connected two-lobe shape with \(r_0/r_1=0.60\). Here \(R_0\) is fixed by the unmodified reference volume. \textbf{SI} Section~\ref{SI-pub:geometries} specifies the analytic profiles and containing ellipsoids; Theorem~\ref{SI-pub:thm-finite} derives the bounds from \(A_S\), \(\mathbf J_S\), and \(E_T\).

A single parameter controls geometric refinement: \(\tau_N=0.1\,2^{-N}\), \(N=0,\ldots,4\), with \(S_N=(1-\tau_N)^{1/3}\Omega\) and \(T_N=\mathbb R^3\setminus\Omega_N^+\), where \(\Omega_N^+=(1+\tau_N)^{1/3}\Omega\). Each omitted interface layer has volume \(\tau_NV\). The exterior integral is split at the fixed sphere \(R_*=4R_0\); the complete far field is included by the Green-identity surface formula in \textbf{SI} Section~\ref{SI-pub:far-field}. This sphere is an integration surface, not a physical boundary condition.

Azimuthal integration is performed analytically using source-ring kernels expressed through complete elliptic integrals \cite[Eq.~(3.2) and Appendix~C]{ZhangZhouDing2023}; the remaining meridional and target integrals are evaluated numerically (\textbf{SI} Section~\ref{SI-pub:potential-numerics}). At fixed regions, quadrature sensitivity in either squared endpoint is compared with \(u_N^2-\ell_N^2\), using a 5\% acceptance threshold and an additional finest-level check. The largest recorded selected-pair change is 0.0506\% of that width. A separate \(4R_0\)-to-\(6R_0\) splitting-radius check is reported in \textbf{SI} Section~\ref{SI-pub:dyadic}. The theoretical endpoints are evaluated without using FEM results. The displayed intervals use the raw endpoints evaluated at each refinement level.

\subsection{Independent axisymmetric finite-element calculations}
\label{sec:method-fem}

The FEM solves Eq. (24) directly on the positive-\(r\), full-\(z\) meridian while retaining the three-dimensional axisymmetric strain state
\begin{equation*}
\varepsilon_{rr}=\partial_ru_r,\qquad
\varepsilon_{zz}=\partial_zu_z,\qquad
\varepsilon_{rz}=\frac12(\partial_zu_r+\partial_ru_z),\qquad
\varepsilon_{\theta\theta}=\frac{u_r}{r},
\end{equation*}
with volume weight \(2\pi r\). Six-node quadratic curved triangles fit the inclusion interface and use graded refinement in the matrix. On the rotation axis only \(u_r=0\) is imposed; \(u_z\) remains free. The distant spherical boundary is clamped, \(u_r=u_z=0\), and its radius is varied through \(8R_0\), \(16R_0\), and \(24R_0\), with the local mesh through \(3R_0\) unchanged during the domain study. Mesh refinements use 32/64/128 angular subdivisions and 6/12/24 interior radial layers, with an additional 256/48 refinement for Case C.

Sphere and spheroid controls test the eigenstrain implementation and finite-domain floor. At \(24R_0\), their full-field relative errors against the infinite-domain analytical solutions are approximately 0.00910\% and 0.00834\%, respectively. For Cases B, C, and D, the final mesh changes in \(q\) are approximately 0.015\%, 0.060\%, and 0.0025\%; changing the outer radius from \(16R_0\) to \(24R_0\) on the common fine mesh changes \(q\) by at most \(1.1\times10^{-5}\%\). The full Cartesian tensor mean is reconstructed before the Frobenius variance is evaluated. The FEM code contains no Newtonian-potential operator and does not use the theoretical interval endpoints. The formulation and convergence records are given in \textbf{SI} Sections~\ref{SI-pub:fem}--\ref{SI-pub:fem-checks}.

\subsection{AI-assisted research process and review}
\label{sec:method-ai}

Generative AI assistants (OpenAI, GPT-6 Astra) supported literature organization, candidate formulation, analytical exploration, symbolic and numerical checking, code development, partial formalization, and language editing. Human researchers determined the mechanics objectives, physical observables, scope, acceptance of assumptions, and validation requirements; conducted independent review and validation; and retained responsibility for the final interpretation.

The theorem, finite-shape enclosure, FEM calculation, and AI-process observations were treated as distinct evidence layers. Selected mathematical steps were reconstructed by an AI reviewer given the definitions and target before seeing the candidate proof. Partial Lean formalization checked selected mathematical components using the Lean 4 theorem prover \cite{deMouraUllrich2021}. Agreement between implementations sharing the same mathematical representation is treated as a consistency or verification check rather than as independent validation of that representation, consistent with established distinctions in computational verification and validation \cite{OberkampfTrucanoHirsch2004}. The displacement-based FEM calculation therefore provides a separate cross-formulation computational test of the finite-shape results. The separation among analytical proof, potential-based numerical checks, finite-shape bounds, and independent FEM is documented in \textbf{SI} Section~\ref{SI-pub:process}, which specifies the scope of the AI review and partial formalization.

\subsection{\emph{Open Problems in Mechanics} curation}
\label{sec:method-opm}

\emph{Open Problems in Mechanics} is maintained separately as a versioned problem map. Candidates were retained when they combined substantial theoretical structure, relevance across a broader family of physical or engineering systems, a genuinely unresolved core after literature review, and a target precise enough that progress could be recognized. Each page records the physical question, theoretical kernel, established results, recent progress, current known/open boundary, criteria for progress or settlement, references, and statement version. Candidates were revised, superseded, or held out when the residual target could not be stated sharply. Table 2 records the ten-problem first-release set used in this study.

\section*{Supporting information}
The Supplementary Information is included in this preprint immediately after the References. It contains the complete analytical support for the local and finite-shape results, the spherical-spectrum derivation, numerical evaluation details, independent finite-element comparison, and research-process documentation.

\section*{Data and code availability}
The \emph{Open Problems in Mechanics} collection and progress records are publicly available at \url{https://generativemechanicslab.github.io/mechanics-open-problems/}. Its structured problem records and website/maintenance code are available at \url{https://github.com/GenerativeMechanicsLab/mechanics-open-problems}. The analytical derivations, reported numerical tables, and convergence diagnostics for the Eshelby study are provided in the Supplementary Information.

\section*{Acknowledgments}
The author thanks Professor Huajian Gao at \mbox{Tsinghua University} for helpful discussions on the engineering interpretation of the results and the development of \emph{Open Problems in Mechanics}. The author also acknowledges OpenAI for providing complimentary access to ChatGPT through the ChatGPT for Academic Researchers programme.

\section*{Author contributions}
B.N. designed the study, conducted the analytical and computational investigations, analyzed the results, and wrote the paper.

\section*{Competing interests}
The author declares no competing interests.

\section*{Declaration of generative AI and AI-assisted technologies in the writing process}
The author used generative AI assistants (OpenAI, GPT-6 Astra) to assist with literature organization, candidate formulation, analytical exploration, symbolic and numerical checks, code development, partial formalization, and language editing. The author verified the reported results, reviewed and edited the manuscript, and takes full responsibility for its content.

\bibliographystyle{unsrtnat}
\bibliography{references}

\BeginArxivSupplement

\ArxivSupplementTitle

\section{Mechanical formulation and geometric setting}
\label{pub:setting}

\subsection{Equilibrium and the normalized strain fluctuation}
\label{pub:mechanics}
We use the observable \(q\) and material normalization of main-text Eqs.~(1)--(3). For a bounded inclusion \(\Omega\subset\mathbb R^3\) of positive volume, the eigenstrain is \(\alpha\mathbf I\chi_\Omega\), with \(\alpha\ne0\). The stiffness is identical in the two phases:
\[
\mathbb C\mathbf A=2\mu\mathbf A+\lambda\operatorname{tr}(\mathbf A)\mathbf I,
\qquad \mu>0,\qquad K=\lambda+2\mu/3>0.
\]
Vectors and second-order tensors are bold; \((D\mathbf u)_{ij}=\partial_j u_i\), \(\boldsymbol\varepsilon(\mathbf u)=\operatorname{sym}D\mathbf u\), and \(|\mathbf A|_F^2=\sum_{ij}A_{ij}^2\). With no remote loading, the displacement belongs to
\(\mathcal D^{1,2}=\{\mathbf u\in L^6(\mathbb R^3;\mathbb R^3):D\mathbf u\in L^2\}\), and satisfies
\begin{equation}
\int_{\mathbb R^3}\mathbb C\boldsymbol\varepsilon(\mathbf u):\boldsymbol\varepsilon(\boldsymbol\varphi)\,dV
=3K\alpha\int_\Omega\operatorname{div}\boldsymbol\varphi\,dV,
\qquad \boldsymbol\varphi\in\mathcal D^{1,2}.
\label{pub:eq-weak}
\end{equation}
A global displacement encodes perfect bonding; the distributional equilibrium equation supplies traction continuity. Integration by parts for compactly supported smooth fields, followed by density, gives
\[
\|\boldsymbol\varepsilon(\mathbf v)\|_2^2
=\tfrac12\|D\mathbf v\|_2^2+\tfrac12\|\operatorname{div}\mathbf v\|_2^2.
\]
Together with \(\mathbf A:\mathbb C\mathbf A=2\mu|\operatorname{dev}\mathbf A|_F^2+K(\operatorname{tr}\mathbf A)^2\), this proves coercivity. The right-hand side of \eqref{pub:eq-weak} is bounded in the gradient norm, so Lax--Milgram gives a unique finite-energy solution.

Define the positive Newtonian potential by
\begin{equation}
N_\Omega(\mathbf x)=\int_\Omega\frac{d\mathbf y}{4\pi|\mathbf x-\mathbf y|},
\qquad -\Delta N_\Omega=\chi_\Omega.
\label{pub:eq-potential}
\end{equation}
The distributional equation \(\mu\Delta\mathbf u+(\lambda+\mu)\nabla\operatorname{div}\mathbf u=3K\alpha\nabla\chi_\Omega\) is solved by
\begin{equation}
\mathbf u=-\alpha\beta\nabla N_\Omega,
\qquad \boldsymbol\varepsilon=-\alpha\beta D^2N_\Omega,
\qquad \beta=\frac{3K}{K+4\mu/3}.
\label{pub:eq-elastic-reduction}
\end{equation}
The Hessian multiplier is \(-\xi_i\xi_j/|\boldsymbol\xi|^2\). Plancherel and \(\sum_{ij}\xi_i^2\xi_j^2/|\boldsymbol\xi|^4=1\) give \(\|D^2N_\Omega\|_2^2=|\Omega|\). The gradient kernel is locally integrable and decays as \(|\mathbf x|^{-2}\); for bounded \(\Omega\), \(\nabla N_\Omega\in L^6\). Thus \eqref{pub:eq-elastic-reduction} is the unique solution of \eqref{pub:eq-weak}.

For \(\langle\mathbf H\rangle_\Omega=|\Omega|^{-1}\int_\Omega\mathbf H\,dV\), orthogonal projection onto constant tensors yields
\begin{equation}
q(\Omega)=\frac{\|\boldsymbol\varepsilon-\langle\boldsymbol\varepsilon\rangle_\Omega\|_{L^2(\Omega)}}{|\alpha|\beta\sqrt{|\Omega|}}
=\frac{1}{\sqrt{|\Omega|}}\min_{\mathbf M\in\operatorname{Sym}(3)}
\|D^2N_\Omega-\mathbf M\|_{L^2(\Omega)}.
\label{pub:eq-q}
\end{equation}
The minimum is attained at the volume mean. In \(\Omega\), \(\operatorname{tr}\boldsymbol\varepsilon=\alpha\beta\) is constant. Consequently,
\begin{equation}
\boldsymbol\sigma-\langle\boldsymbol\sigma\rangle_\Omega
=2\mu(\boldsymbol\varepsilon-\langle\boldsymbol\varepsilon\rangle_\Omega),
\quad
\varepsilon_{\rm rms}=|\alpha|\beta q,
\quad \sigma_{\rm rms}=2\mu|\alpha|\beta q.
\label{pub:eq-physical-rms}
\end{equation}
These are volume RMS fluctuations, not maximum-field estimates.

Write \(\overline{\boldsymbol\varepsilon}=\langle\boldsymbol\varepsilon\rangle_\Omega\) and \(\boldsymbol\varepsilon'=\boldsymbol\varepsilon-\overline{\boldsymbol\varepsilon}\). Since the strain trace is constant inside the inclusion, \(\operatorname{tr}\boldsymbol\varepsilon'=0\). The elastic energy associated with these fluctuations is therefore
\[
U_{\mathrm{fluct}}
=\frac12\int_\Omega
\boldsymbol\varepsilon':\mathbb C:\boldsymbol\varepsilon'\,dV
=\mu\int_\Omega|\boldsymbol\varepsilon'|_{\mathrm F}^{2}\,dV
=\mu|\Omega|(|\alpha|\beta)^2q^2.
\]
The cross term between \(\boldsymbol\varepsilon'\) and the mean elastic strain \(\overline{\boldsymbol\varepsilon}-\alpha\mathbf I\) integrates to zero because \(\int_\Omega\boldsymbol\varepsilon'\,dV=0\). Thus, the interior elastic energy separates into a mean-strain contribution and \(U_{\mathrm{fluct}}\).

\subsection{Normal graphs and the two geometric norms}
\label{pub:norms}
For the local theorem, lengths are divided by the equivalent-volume radius of a fixed nondegenerate ellipsoid and its center is placed at the origin. Write
\begin{equation}
\begin{gathered}
E=\{\mathbf x:\mathbf x^T\mathbf Q\mathbf x<1\}=\mathbf T B,
\qquad \mathbf Q=\mathbf Q^T>0,\\
\det\mathbf Q=1,\qquad \mathbf T=\mathbf Q^{-1/2},\qquad V_E=4\pi/3.
\end{gathered}
\label{pub:eq-ellipsoid}
\end{equation}
Here \(B\) is the unit ball. The reference normal is \(\mathbf n_E=\mathbf Q\mathbf x/|\mathbf Q\mathbf x|\). The inclusion \(\Omega_h\) has boundary
\begin{equation}
\partial\Omega_h=\{\mathbf x+h(\mathbf x)\mathbf n_E(\mathbf x):\mathbf x\in\partial E\}.
\label{pub:eq-normal-graph}
\end{equation}
For \(f_h(\boldsymbol\theta)=h(\mathbf T\boldsymbol\theta)\), let \(\widehat f_h(\mathbf y)=f_h(\mathbf y/|\mathbf y|)\) on \(\mathcal A=\{1/2\le|\mathbf y|\le3/2\}\). Fix \(0<\gamma<1\) and use
\begin{align}
\|h\|_{Z_{E,\gamma}}&=\|\widehat f_h\|_{L^\infty(\mathcal A)}+
\|D\widehat f_h\|_{L^\infty(\mathcal A)}+
[D\widehat f_h]_{C^{0,\gamma}(\mathcal A)},
\label{pub:eq-Z}\\
\|h\|_{X_E}^2&=\frac1{4\pi}\sum_{\ell\ge0}(\ell+1)
\|(f_h)_\ell\|_{L^2(S^2)}^2.
\label{pub:eq-X}
\end{align}
The H\"older seminorm in \eqref{pub:eq-Z} is the supremum of \(|D\widehat f_h(\mathbf y)-D\widehat f_h(\mathbf z)|/|\mathbf y-\mathbf z|^\gamma\) over distinct points of \(\mathcal A\). The annulus specifies a norm convention only. The spherical-harmonic weights in \eqref{pub:eq-X} are equivalent to \((1+\ell(\ell+1))^{1/2}\), and pullback by the fixed smooth map \(\mathbf T\) preserves Sobolev order. Thus \(X_E\) is a geometry-only \(H^{1/2}(\partial E)\) norm, with RMS normalization, and \(Z_{E,\gamma}\hookrightarrow X_E\). The main text uses \(Z_E=Z_{E,1/2}\).

The strong norm controls geometric admissibility, not the measured error. Sufficiently small \(\|h\|_{Z_{E,\gamma}}\) makes \eqref{pub:eq-normal-graph} an embedded boundary. No volume, centroid, symmetry, or convexity constraint is added to the theorem. Constants may depend on \(E,\gamma\), and these norm conventions, and are not uniform over degenerating ellipsoids.

Let \(\mathcal M_E\) denote the compact local chart of normal heights of actual ellipsoids constructed in Section~\ref{pub:fitting}. The distance in main-text Eq.~(5) is
\begin{equation}
\rho_E(h)=\min_{g\in\mathcal M_E}\|h-g\|_{X_E}.
\label{pub:eq-rho}
\end{equation}
Its minimizer need not be unique. The chart is local and contains all sufficiently nearby ellipsoids; \eqref{pub:eq-rho} is not an unrestricted global ellipsoid-fitting problem.

\section{Local nonlinear stability}
\label{pub:local}

\subsection{Statement and linear operator}
\label{pub:theorem-statement}
For a scalar boundary height \(k\), define
\begin{equation}
\begin{gathered}
S_Ek(\mathbf x)=\int_{\partial E}\frac{k(\mathbf y)}{4\pi|\mathbf x-\mathbf y|}\,dS_{\mathbf y},\\
\mathcal L_Ek=V_E^{-1/2}\Pi_ED^2S_Ek,
\qquad \Pi_E\mathbf H=\mathbf H-\langle\mathbf H\rangle_E.
\end{gathered}
\label{pub:eq-linear-operator}
\end{equation}
All norms of \(\mathcal L_E k\) are volume \(L^2(E)\) norms. Let \(\mathcal K_E\) be the nine-dimensional space
\begin{equation}
\mathcal K_E=\left\{\mathbf b\cdot\mathbf n_E(\mathbf x)-
\frac{\mathbf x^T\mathbf B\mathbf x}{2|\mathbf Q\mathbf x|}:
\mathbf b\in\mathbb R^3,\ \mathbf B=\mathbf B^T\right\},
\label{pub:eq-tangent-space}
\end{equation}
and let \(P\) be its \(X_E\)-orthogonal projection. Define
\begin{equation}
c_E=\inf_{\substack{Pk=0\\\|k\|_{X_E}=1}}\|\mathcal L_Ek\|_2,
\qquad
C_E=\sup_{\substack{Pk=0\\\|k\|_{X_E}=1}}\|\mathcal L_Ek\|_2.
\label{pub:eq-gains}
\end{equation}
The boundedness of the operator, equality \(\ker\mathcal L_E=\mathcal K_E\), and positivity of \(c_E\) are proved below.

\begin{theorem}[Local two-sided stability]
\label{pub:thm-local}
For each fixed ellipsoid \eqref{pub:eq-ellipsoid} and \(0<\gamma<1\), there exist \(\delta_0>0\) and \(A_{E,\gamma}<\infty\) such that
\begin{equation}
(c_E-A_{E,\gamma}\delta)\rho_E(h)
\le q(\Omega_h)\le
(C_E+A_{E,\gamma}\delta)\rho_E(h),
\qquad \|h\|_{Z_{E,\gamma}}\le\delta\le\delta_0.
\label{pub:eq-local-theorem}
\end{equation}
Here \(0<c_E\le C_E<\infty\), \(c_E-A_{E,\gamma}\delta_0>0\), and \(c_E,C_E\) are the optimal limiting gains as \(\delta\to0\). The constants are independent of perturbation frequency, eigenstrain amplitude, and the elastic moduli after the normalization \eqref{pub:eq-q}.
\end{theorem}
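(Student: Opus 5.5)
The plan follows the architecture announced in Methods Section~\ref{sec:method-local}: a linear step, then a nonlinear step, tied together by a fitting argument.

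\textbf{Step 1: properties of $\mathcal L_E$.} I would first establish three facts about the operator in \eqref{pub:eq-linear-operator}.

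\emph{Boundedness.} $\mathcal L_E$ is bounded from $X_E$ to $L^2(E)$. The single-layer potential of an $H^{1/2}(\partial E)$ density has Hessian in $L^2$ of each side. This can be seen in the flat model, where the computation of main-text Eq.~(9) gives exactly $\kappa a^2$. In general one uses a partition of unity with flattening charts, together with the mapping properties of $S_E$ from $H^{-1/2}$ to $H^{1/2}$ and the jump relations on a smooth surface.

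\emph{Kernel.} $\ker\mathcal L_E=\mathcal K_E$. The inclusion $\mathcal K_E\subset\ker$ follows by differentiating the uniform-Hessian identity of Eshelby along the nine-parameter ellipsoid family $\mathbf x^T\mathbf Q'\mathbf x<1$ shifted by $\mathbf b$. For the converse, if $D^2S_Ek$ is constant in $E$, then $S_Ek$ is a quadratic inside $E$. The exterior solution is harmonic and decays, and the normal-derivative jump equals $k$. Using ellipsoidal harmonics (Lamé functions), the exterior extension of an interior polynomial of degree $n$ is a single Lamé mode of degree $n$. The interior-quadratic condition therefore forces $k$ to lie in degrees $\le 2$ of the ellipsoidal expansion, and these are exactly $\mathcal K_E$. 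This matches \cite{Markenscoff1998a,Markenscoff1998b}.

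\emph{Lower bound.} $c_E>0$. I would write $\mathcal L_E^*\mathcal L_E$, acting in $X_E$, as a Fredholm operator: its principal symbol is the flat-interface symbol $\asymp|\xi|$, which is elliptic of the same order as the $X_E$ weight. Its kernel is $\mathcal K_E$, and on the closed complement $\{Pk=0\}$ a compactness argument gives a positive infimum. Optimality of $c_E$ and $C_E$ as $\delta\to0$ follows by testing on $h=\epsilon k$ for near-extremal $k$.

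\textbf{Step 2: nonlinear remainder.} Next I would prove the expansion of main-text Eq.~(7). Extend $h$ to a whole-space $C^{1,\gamma}$ diffeomorphism $\Phi_h=\mathrm{id}+\eta\,\widehat f_h\mathbf n$ supported near $\partial E$, with $\|\Phi_h-\mathrm{id}\|_{C^{1,\gamma}}\lesssim\delta$. Then pull back $D^2N_{\Omega_h}$ to $E$; equivalently, pull back the weak equation for $\nabla N$ to $E$, giving variable coefficients $\mathbf I+O(\delta)$ in $C^{0,\gamma}$.

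Fit an actual ellipsoid $g\in\mathcal M_E$ realizing $\rho_E(h)$. An implicit-function argument on the compact chart shows that the minimizer of the $X_E$ distance satisfies $P(h-g)\approx0$ up to $O(\delta)\rho$, and $\|g\|_{Z}\lesssim\delta$. Compose with the affine map sending $E$ to the ellipsoid $E_g$. Because $D^2N_{E_g}$ is exactly constant and affine maps preserve uniformity up to a change of the constant, $q$ is unchanged, and the problem reduces to a normal graph $\tilde h$ over $E_g$ with $\|\tilde h\|_{X}\simeq\rho_E(h)$.

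Then expand. Writing $\Pi_E$ of the pulled-back Hessian as $\mathcal L_{E_g}\tilde h+R$, one needs $\|R\|_{L^2(E)}\le C\delta\|\tilde h\|_{X}$. The key point is that $R$ is bilinear in (size $\delta$) $\times$ (residual $\tilde h$), since it vanishes identically when $\tilde h=0$. The bound comes from a difference estimate: compare the fields for the two boundaries $E_g$ and $\Omega_h$ through an interpolating family $E_g+t\tilde h$, $t\in[0,1]$, and apply the energy estimate to the derivative in $t$. The coefficients there are $C^{0,\gamma}$-perturbed, which is what requires the strong $Z$ norm. This requires $H^{1/2}$-to-$L^2$ boundedness of the linearized Hessian uniformly over boundaries near $E$, which I would obtain from $C^{1,\gamma}$ Schauder-type stability of the layer-potential estimates.

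\textbf{Step 3: assembly.} To conclude:
\begin{itemize}
\item Continuity of $c_{E'}$ and $C_{E'}$ in $E'$ over the chart gives $|c_{E_g}-c_E|\lesssim\delta$, and likewise for $C$.
\item Since $P$ annihilates only the ellipsoidal part, $\|\mathcal L_{E_g}\tilde h\|$ lies between $(c_E-O(\delta))\rho$ and $(C_E+O(\delta))\rho$.
\item Adding the remainder from Step~2 yields the stated bounds with $A_{E,\gamma}$ collecting these constants.
\item Choose $\delta_0$ so that $c_E-A_{E,\gamma}\delta_0>0$.
\end{itemize}
Independence from $\alpha$, $K$ and $\mu$ is automatic from the normalization \eqref{pub:eq-q}. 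Independence from perturbation frequency holds because every estimate is in the $X_E$/$Z$ norms.

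\textbf{Main obstacle.} I expect the hardest step to be the relative remainder bound $C\delta\rho_E(h)$ rather than $C\delta\|h\|_X$: making it proportional to the distance from the family requires the exact re-centering on the fitted ellipsoid $E_g$. It also requires the $C^{1,\gamma}$-to-$H^{1/2}$ bilinear estimate, where low regularity of $\tilde h$ must be paired with Hölder control of the coefficients. The kernel identification, via Lamé functions, is the second delicate point.
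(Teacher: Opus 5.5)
\textbf{Step 1.} This is an acceptable alternative to the paper's constructive recovery estimate. The paper subtracts a harmonic quadratic, applies Poincar\'e twice, and uses the exterior Dirichlet-to-Neumann trace. Your Fredholm/G{\aa}rding argument reaches $c_E>0$ non-constructively instead. The kernel identification needs no Lam\'e functions: injectivity of $p\mapsto k_p$ on harmonic polynomials of degree at most two, together with the count $1+3+5=9$, suffices. Also, a degree-$n$ harmonic polynomial is generally a combination of Lam\'e modes of several degrees, not a single mode.

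\textbf{Step 2: the genuine gap.} This is where the theorem actually lives, and your lift does not work. The map $\Phi_h=\mathrm{id}+\eta\widehat f_h\mathbf n$ extends the height constantly along normals. Hence $\|D(\Phi_h-\mathrm{id})\|_{L^2}$ is comparable to $\|h\|_{H^1(\partial E)}$, not to $\|h\|_{X_E}$. Every source in the differenced or differentiated pulled-back equation carries a factor $D\mathbf V_{\tilde h}$ measured in $L^2$. The energy estimate you describe therefore yields a remainder of order $\delta\|\tilde h\|_{H^1}$. For a residual of angular degree $\ell$ this exceeds $\delta\rho_E(h)$ by a factor of order $\ell^{1/2}$, so frequency independence is lost.

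Your fallback does not repair this. You propose uniform $H^{1/2}\to L^2$ bounds for the linearized Hessian along the intermediate boundaries $E_g+t\tilde h$, obtained from Schauder theory. Those boundaries are only $C^{1,\gamma}$. Schauder estimates give $\nabla S\in C^{0,\gamma}$ up to the boundary, hence only $|D^2S|\lesssim d^{\gamma-1}$, which does not give square integrability when $\gamma\le 1/2$ (the main text uses $\gamma=1/2$). In addition, the normal-velocity factor $\mathbf n_{E_g}\cdot\mathbf n_{E_t}$ is merely $C^{0,\gamma}$ and need not act boundedly on $H^{1/2}$.

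\textbf{The missing idea} is the paper's two-norm lift: the Poisson extension into the ball together with its Kelvin reflection outside. It satisfies both $\|D\mathbf V_h\|_{L^2}\le C\|h\|_{X_E}$ (the modal energy is exactly $\sum_\ell(2\ell+1)\|\mathbf b_{h,\ell}\|^2$) and $\|D\mathbf V_h\|_\infty\le C\|h\|_{Z_{E,\gamma}}$. With this lift there is no interpolation and no linearization at a rough domain. The remainder $\mathbf R=\mathbf U_h-\mathbf U_g-\mathbf Y_k$ solves a uniformly coercive equation on the fixed domain $E$. Each of its four sources is an $L^\infty$ coefficient perturbation of size $\delta$ times an $L^2$ gradient of size $\|k\|_{X_E}$. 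Boundary roughness thus enters only through $L^\infty$, for every $0<\gamma<1$.

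\textbf{Re-centering on $E_g$.} This creates obligations the paper never incurs. Transporting the inclusion by a non-similarity affine map does not preserve $q$, since the Newtonian potential is not affinely covariant. So the step can only mean re-parametrizing $\partial\Omega_h$ as a normal graph over $E_g$. You would then need two things, neither of which you argue:
\begin{itemize}
\item $\operatorname{dist}_{X_{E_g}}(\tilde h,\mathcal K_{E_g})=(1+O(\delta))\rho_E(h)$, not mere norm equivalence, so that the leading coefficients remain the optimal $c_E,C_E$;
\item Lipschitz dependence of the optimal gains on $E'$.
\end{itemize}
The paper keeps the base fixed instead. The canonical fit $g_h=g(Ph)$ gives $k_h=h-g_h$ with $Pk_h=0$ exactly and $\rho_E(h)\le\|k_h\|_{X_E}\le\sqrt{1+C\delta^2}\,\rho_E(h)$. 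The registered Hessian of the actual ellipsoid is constant on $E$, so $\Pi_E\mathbf H_{g_h}=0$. The bilinear estimate then gives $\|\mathcal F_E(h)-\mathcal L_Ek_h\|_{L^2(E)}\le C\delta\|k_h\|_{X_E}$ with the single operator $\mathcal L_E$. Moving domains enter only through the Lipschitz bound $\|D\mathbf U_{g_a}-D\mathbf U_0\|_\infty\le C|a|$ for smooth actual ellipsoids.

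\textbf{Averaging.} You never convert reference averaging on $E$ to physical averaging on $\Omega_h$. The Jacobian bounds $j_-\le J_h\le j_+$ cost only factors $1+O(\delta)$, but this step is needed to reach $q(\Omega_h)$.
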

The proof registers both phases on a fixed domain, establishes linear control modulo \(\mathcal K_E\), fits an actual ellipsoid, and bounds the nonlinear error relative to the fitted residual. The final Jacobian comparison restores the physical volume average. Neither \(\delta_0\) nor \(A_{E,\gamma}\) is numerically evaluated.

In auxiliary estimates below, \(C\) denotes a finite constant that may change between occurrences and depends only on the fixed \(E,\gamma\), and norm conventions. The symbols \(c_E,C_E\) are reserved for \eqref{pub:eq-gains}.

\subsection{Registration and the fixed-domain equation}
\label{pub:registration}
\begin{lemma}[Two-norm coordinate extension]
\label{pub:lem-extension}
There is a linear extension \(h\mapsto\mathbf V_h\), with trace \(h\mathbf n_E\), satisfying
\begin{equation}
\|D\mathbf V_h\|_{L^2(\mathbb R^3)}\le C\|h\|_{X_E},
\qquad
\|\mathbf V_h\|_\infty+\|D\mathbf V_h\|_\infty
\le C\|h\|_{Z_{E,\gamma}}.
\label{pub:eq-lift-bounds}
\end{equation}
For sufficiently small \(h\), \(\boldsymbol\Phi_h=\operatorname{Id}+\mathbf V_h\) is a global bi-Lipschitz map taking \(E\) to \(\Omega_h\) and their exteriors to one another.
\end{lemma}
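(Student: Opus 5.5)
The plan is to build \(\mathbf V_h\) by pulling back to the unit sphere and extending the angular profile radially with a frequency-adapted (Poisson-type) cutoff. The pullback is \(f_h(\boldsymbol\theta)=h(\mathbf T\boldsymbol\theta)\). Work in the coordinates \(\mathbf y=\mathbf T^{-1}\mathbf x\), where \(E\) becomes \(B\), and write \(\mathbf y=r\boldsymbol\theta\). Define the scalar extension
\[
F_h(r\boldsymbol\theta)=\eta(r)\,(\mathcal P_{r}f_h)(\boldsymbol\theta),
\]
where \(\mathcal P_r\) is a harmonic-extension or heat-type smoothing on \(S^2\) with parameter \(|1-r|\). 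In modal form it is \(\sum_\ell e^{-(\ell+1)|1-r|}(f_h)_\ell\), or more safely a convolution \(\phi_{|1-r|}*f_h\) on the sphere with a fixed smooth mollifier. The function \(\eta\) is a smooth cutoff equal to \(1\) near \(r=1\) and supported in \(\mathcal A=\{1/2\le r\le 3/2\}\). We then set \(\mathbf V_h(\mathbf x)=F_h(\mathbf T^{-1}\mathbf x)\,\widetilde{\mathbf n}(\mathbf x)\). Here \(\widetilde{\mathbf n}\) is a fixed smooth extension of \(\mathbf n_E=\mathbf Q\mathbf x/|\mathbf Q\mathbf x|\) to a neighborhood of \(\partial E\), for instance the same formula on \(\mathbf T\mathcal A\). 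This map is linear in \(h\), and at \(r=1\) it has trace \(f_h\,\mathbf n_E=h\mathbf n_E\).

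For the \(L^2\) gradient bound, the radially decaying extension is the classical \(H^{1/2}\to H^1\) lift. The radial part gives \(\int|\partial_rF|^2\lesssim\sum_\ell(\ell+1)^2\|(f_h)_\ell\|^2\int_0^\infty e^{-2(\ell+1)t}\,dt\), which is \(\sim\sum(\ell+1)\|(f_h)_\ell\|^2\). The tangential part \(\ell(\ell+1)/r^2\) is handled the same way. This gives \(\|DF_h\|_2\lesssim\|h\|_{X_E}\) as in \eqref{pub:eq-X}. The terms coming from \(D\eta\) and \(D\widetilde{\mathbf n}\) only need \(\|F_h\|_{L^2}\), which is bounded by \(\sum\|(f_h)_\ell\|^2/(\ell+1)\). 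Composition with the fixed linear map \(\mathbf T\) costs only constants.

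For the \(L^\infty\) and Lipschitz bounds, the mollifier form is the better choice. Convolution on \(S^2\) with an approximate identity of width \(t=|1-r|\) does not increase \(\|f\|_\infty\) or \(\|D_\theta f\|_\infty\). The radial derivative \(\partial_t(\phi_t*f)\) is the convolution of \(D_\theta f\) with a kernel whose moment is \(O(1)\), so it is bounded by \(\|D\widehat f_h\|_\infty\). Since \(\widehat f_h\) is the \(0\)-homogeneous extension on \(\mathcal A\), the quantities in \eqref{pub:eq-Z} control \(\|f_h\|_{C^1(S^2)}\). This yields \(\|\mathbf V_h\|_\infty+\|D\mathbf V_h\|_\infty\le C\|h\|_{Z_{E,\gamma}}\). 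In fact the full \(C^{1,\gamma}\) bound is available, though only the Lipschitz bound is stated.

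The main obstacle is making one operator satisfy both estimates at once. The \(H^{1/2}\) estimate wants the spectral Poisson decay; the \(W^{1,\infty}\) estimate wants a positive approximate identity. I would resolve this by writing the Poisson extension on the sphere as convolution with the Poisson kernel, which is positive, zonal and of unit mass. With that, both estimates follow from the same operator: the spectral computation gives the first, and the positivity with first-moment bounds gives the second.

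For the bi-Lipschitz claim, if \(\|D\mathbf V_h\|_\infty\le C\delta<1/2\), then \(\boldsymbol\Phi_h=\operatorname{Id}+\mathbf V_h\) satisfies
\[
|\boldsymbol\Phi_h(\mathbf x)-\boldsymbol\Phi_h(\mathbf z)|\ge\tfrac12|\mathbf x-\mathbf z|,
\]
so it is injective. It equals the identity outside the compact set \(\mathbf T\mathcal A\), and it is proper, so it is a global homeomorphism by invariance of domain. It maps \(\partial E\) onto \(\partial\Omega_h\) as given in \eqref{pub:eq-normal-graph}. By connectedness, it therefore maps \(E\) onto the bounded component \(\Omega_h\) and the exterior onto the exterior.
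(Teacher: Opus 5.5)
Your construction differs from the paper's. The paper extends the vector trace \(h\mathbf n_E\) harmonically on all of \(\mathbb R^3\), with Poisson inside and Kelvin outside. You instead cut off a scalar extension of \(f_h\) and multiply by a smooth normal field. The trace, the spectral \(H^{1/2}\to H^1\) computation, and your invariance-of-domain argument are fine.

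The gap is the radial derivative in the \(W^{1,\infty}\) bound once you commit to the Poisson kernel. Positivity and unit mass give \(\|F_h\|_\infty\le\|f_h\|_\infty\). Rotation generators commute with zonal convolution, so tangential derivatives are also controlled. Your radial estimate, however, needs the normalized first moment \(t^{-1}\int_{S^2}|\boldsymbol\omega-\boldsymbol\theta|\,P_t(\boldsymbol\theta\cdot\boldsymbol\omega)\,dA_{\boldsymbol\omega}\) to be \(O(1)\). At angular distance \(s\) one has \(P_t\approx t/[2\pi(t^2+s^2)^{3/2}]\), so this moment is \(\asymp\log(1/t)\); in the Euclidean profile, \(\int_{\mathbb R^2}|\mathbf w|(1+|\mathbf w|^2)^{-3/2}\,d\mathbf w\) diverges. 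This is not a bookkeeping defect: the harmonic extension of a Lipschitz trace is in general not Lipschitz. Locally, the trace \(|x_1|\) near its kink gives \(u\approx(2/\pi)\,y\log(1/y)\) at distance \(y\) from the boundary. An argument that uses only the \(C^1\) part of \(Z_{E,\gamma}\), as yours does, therefore cannot give the stated bound for this operator. Your remark that only the Lipschitz part of the norm is needed is incorrect.

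The missing ingredient is the H\"older seminorm of \(D\widehat f_h\). Because \(\partial_tP_t\) has zero mass, and \(\boldsymbol\omega\mapsto\nabla f(\boldsymbol\theta)\cdot\boldsymbol\omega\) is odd under a reflection fixing \(\boldsymbol\theta\), you may write
\(\partial_t(P_t*f)(\boldsymbol\theta)=\int\partial_tP_t\,[f(\boldsymbol\omega)-f(\boldsymbol\theta)-\nabla f(\boldsymbol\theta)\cdot\boldsymbol\omega]\,dA_{\boldsymbol\omega}\).
The bracket is \(O(s^{1+\gamma})\) times the H\"older seminorm, and \(|\partial_tP_t|\lesssim(t^2+s^2)^{-3/2}\). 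The integral is therefore bounded uniformly in \(t\) by a multiple of \(\int_0^\pi s^{\gamma-1}\,ds\). This is exactly the paper's Dirichlet-to-Neumann estimate \(C\|b\|_{C^{1,\gamma}}\int_0^2r^{\gamma-1}\,dr\). The paper then propagates it by the maximum principle for the harmonic Cartesian derivatives and by differentiating the Kelvin extension.

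Alternatively, you could keep a compactly supported zonal mollifier, for which your first-moment argument is valid with only \(C^1\) data. The \(H^{1/2}\) estimate then no longer comes from Poisson multipliers. It must be proved separately from decay of the mollifier's Funk--Hecke coefficients, for example \(|m_\ell(t)|\lesssim\min(1,(t\ell)^{-2})\) and \(|\partial_tm_\ell(t)|\lesssim\min(\ell^2t,t^{-1})\).
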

\begin{proof}
Prescribe \(\mathbf b_h(\boldsymbol\theta)=h(\mathbf T\boldsymbol\theta)\mathbf n_E(\mathbf T\boldsymbol\theta)\). Let \(\mathbf w_h\) be its componentwise Poisson extension to \(B\), and set
\begin{equation}
\mathbf W_h(\mathbf y)=
\begin{cases}
\displaystyle\int_{S^2}\frac{1-|\mathbf y|^2}{4\pi|\mathbf y-\boldsymbol\theta|^3}
\mathbf b_h(\boldsymbol\theta)\,dS_{\boldsymbol\theta},& |\mathbf y|<1,\\
|\mathbf y|^{-1}\mathbf w_h(\mathbf y/|\mathbf y|^2),& |\mathbf y|>1,
\end{cases}
\qquad \mathbf V_h(\mathbf x)=\mathbf W_h(\mathbf T^{-1}\mathbf x).
\label{pub:eq-lift}
\end{equation}
The traces coincide on the sphere. For a degree-\(\ell\) component, the interior and exterior radial factors are \(r^\ell\) and \(r^{-\ell-1}\). Green's identity and orthogonality therefore give
\begin{equation}
\int_{\mathbb R^3}|D\mathbf W_h|_F^2\,d\mathbf y
=\sum_{\ell\ge0}(2\ell+1)\|\mathbf b_{h,\ell}\|_{L^2(S^2)}^2.
\label{pub:eq-lift-energy}
\end{equation}
Partial harmonic sums extend this identity from smooth finite sums to \(H^{1/2}\) traces. Multiplication by the fixed smooth normal is bounded on \(H^{1/2}\): in its difference seminorm, the extra term contains the kernel \(|\boldsymbol\theta-\boldsymbol\omega|^{-1}\), which is integrable on the sphere. Changing variables by \(\mathbf T\) proves the first estimate in \eqref{pub:eq-lift-bounds}.

For a scalar trace \(b\), the interior normal derivative is
\[
\Lambda b(\boldsymbol\theta)=\frac1{2\pi}\operatorname{p.v.}\int_{S^2}
\frac{b(\boldsymbol\theta)-b(\boldsymbol\omega)}{|\boldsymbol\theta-\boldsymbol\omega|^3}
\,dS_{\boldsymbol\omega}.
\]
Subtracting the tangential linear term, whose symmetric integral vanishes, bounds the remaining integral by
\(C\|b\|_{C^{1,\gamma}}\int_0^2 r^{\gamma-1}\,dr\). Tangential derivatives are directly controlled. The maximum principle applied to the harmonic Cartesian derivatives bounds the interior gradient. Differentiating the Kelvin extension, with \(r=|\mathbf y|\), \(\mathbf e=\mathbf y/r\), and \(\mathbf z=\mathbf y/r^2\), gives
\[
D\mathbf W_h=-r^{-2}\mathbf w_h(\mathbf z)\otimes\mathbf e
+r^{-3}D\mathbf w_h(\mathbf z)(\mathbf I-2\mathbf e\otimes\mathbf e),
\]
which controls the exterior gradient and tail. These estimates prove the second inequality in \eqref{pub:eq-lift-bounds}. General \(C^{1,\gamma}\) traces follow by smooth approximation with uniformly bounded \(C^{1,\gamma}\) norms and convergence in \(C^{1,\gamma'}\), \(0<\gamma'<\gamma\); no density in the full strong H\"older norm is assumed.

The piecewise extension has matching traces, so its weak derivative has no interface measure. It belongs to \(\mathcal D^{1,2}\) and is globally Lipschitz. If \(a_h=\|D\mathbf V_h\|_\infty<1/4\), contraction solves \(\mathbf x=\mathbf y-\mathbf V_h(\mathbf x)\) for every \(\mathbf y\), and
\begin{equation}
(1-a_h)^3\le\det D\boldsymbol\Phi_h\le(1+a_h)^3.
\label{pub:eq-J-bounds}
\end{equation}
The boundary image is \eqref{pub:eq-normal-graph}; the homeomorphism preserves the bounded and unbounded complementary components. Hence it identifies both phases exactly.
\end{proof}

Introduce the coefficient fields and the registered potential gradient together:
\begin{equation}
\begin{gathered}
\mathbf D_h=D\mathbf V_h,\qquad \mathbf F_h=\mathbf I+\mathbf D_h,
\qquad J_h=\det\mathbf F_h,\\
\mathbf A_h=J_h\mathbf F_h^{-1}\mathbf F_h^{-T},\qquad
\mathbf Q_h=J_h\mathbf F_h^{-T},\qquad
\mathbf U_h=(\nabla N_{\Omega_h})\circ\boldsymbol\Phi_h.
\end{gathered}
\label{pub:eq-coefficients}
\end{equation}
Here \(\mathbf Q_h\) is a cofactor matrix, distinct from the fixed ellipsoid matrix \(\mathbf Q\). Distributionally, \(\mathbf w=\nabla N_{\Omega_h}\) obeys
\(\int D\mathbf w:D\boldsymbol\psi=-\int_{\Omega_h}\operatorname{div}\boldsymbol\psi\).
Using \(\boldsymbol\psi=\boldsymbol\varphi\circ\boldsymbol\Phi_h^{-1}\) yields
\begin{equation}
\int_{\mathbb R^3}(D\mathbf U_h\mathbf A_h):D\boldsymbol\varphi
=-\int_E\mathbf Q_h:D\boldsymbol\varphi,
\qquad \boldsymbol\varphi\in\mathcal D^{1,2}.
\label{pub:eq-fixed-domain}
\end{equation}
The transformation retains the interface source and the displayed matrix order. The weak chain rule and change of variables hold by bi-Lipschitz composition and energy-space approximation. Furthermore,
\(\mathbf z^T\mathbf A_h\mathbf z\ge(1-a_h)^3(1+a_h)^{-2}|\mathbf z|^2\), so the transformed problem is uniformly coercive.

For \(\mathbf Q(\mathbf D)=\det(\mathbf I+\mathbf D)(\mathbf I+\mathbf D)^{-T}\) and \(\mathbf A(\mathbf D)=\det(\mathbf I+\mathbf D)(\mathbf I+\mathbf D)^{-1}(\mathbf I+\mathbf D)^{-T}\), their derivatives at zero are
\begin{equation}
\mathbf Q_1(\mathbf D)=(\operatorname{tr}\mathbf D)\mathbf I-\mathbf D^T,
\qquad
\mathbf A_1(\mathbf D)=(\operatorname{tr}\mathbf D)\mathbf I-\mathbf D-\mathbf D^T.
\label{pub:eq-linear-coefficients}
\end{equation}
All second derivatives are bounded on a sufficiently small fixed matrix ball. For \(\mathbf Z=\mathbf D-\mathbf G\) and \(|\mathbf D|_F,|\mathbf G|_F\le t\), the fundamental theorem of calculus along the joining segment gives
\begin{equation}
\begin{aligned}
&|\mathbf Q(\mathbf D)-\mathbf Q(\mathbf G)-\mathbf Q_1(\mathbf Z)|_F
+|\mathbf A(\mathbf D)-\mathbf A(\mathbf G)-\mathbf A_1(\mathbf Z)|_F
\le Ct|\mathbf Z|_F,\\
&|(\mathbf I+\mathbf D)^{-1}-(\mathbf I+\mathbf G)^{-1}+\mathbf Z|_F
\le Ct|\mathbf Z|_F.
\end{aligned}
\label{pub:eq-coefficient-remainders}
\end{equation}
The corresponding first differences are bounded by \(C|\mathbf Z|_F\). No commutativity assumption is used.

\subsection{Energy derivative and the linear forward estimate}
\label{pub:linearization}
The classical ellipsoid formula gives a constant interior Hessian \(\mathbf H_E\) and bounded one-sided whole-space Hessians; the formula is specified in Section~\ref{pub:background} \cite{DiFratta2016}. Let \(\mathbf U_0=\nabla N_E\). For a direction \(k\), define \(\mathbf Y_k\in\mathcal D^{1,2}\) by
\begin{equation}
\int D\mathbf Y_k:D\boldsymbol\varphi
=-\int_E\mathbf Q_1(\mathbf D_k):D\boldsymbol\varphi
-\int D\mathbf U_0\mathbf A_1(\mathbf D_k):D\boldsymbol\varphi.
\label{pub:eq-derivative-equation}
\end{equation}
Whole-space integrals without a domain are over \(\mathbb R^3\). Testing with \(\mathbf Y_k\) and using \eqref{pub:eq-lift-bounds} gives
\begin{equation}
\|D\mathbf Y_k\|_2\le C\|\mathbf D_k\|_2\le C\|k\|_{X_E}.
\label{pub:eq-Y-bound}
\end{equation}
Subtracting \eqref{pub:eq-fixed-domain} at \(h=tk\) and \(0\), while keeping \(\mathbf A_{tk}\) on the left, first gives \(\|D(\mathbf U_{tk}-\mathbf U_0)\|_2\le C|t|\|\mathbf D_k\|_2\). After subtraction of \(t\) times \eqref{pub:eq-derivative-equation}, the residual sources are
\[
\begin{aligned}
&-\chi_E[\mathbf Q_{tk}-\mathbf I-t\mathbf Q_1(\mathbf D_k)],\\
&-D\mathbf U_0[\mathbf A_{tk}-\mathbf I-t\mathbf A_1(\mathbf D_k)],
\qquad -tD\mathbf Y_k(\mathbf A_{tk}-\mathbf I).
\end{aligned}
\]
Each is controlled using \eqref{pub:eq-coefficient-remainders}, giving
\begin{equation}
\|D(\mathbf U_{tk}-\mathbf U_0-t\mathbf Y_k)\|_2
\le C t^2\|\mathbf D_k\|_\infty\|\mathbf D_k\|_2.
\label{pub:eq-energy-derivative}
\end{equation}
This establishes the derivative in the full energy space.

On compact subsets of \(E\), differentiation of the moving-domain potential gives \(S_Ek\); differentiating the evaluation point contributes \(\mathbf H_E\mathbf V_k\). Identification with the energy derivative therefore yields
\begin{equation}
\mathbf Y_k=\nabla S_Ek+\mathbf H_E\mathbf V_k,
\qquad D^2S_Ek=D\mathbf Y_k-\mathbf H_E\mathbf D_k \quad\hbox{in }E.
\label{pub:eq-identify-derivative}
\end{equation}
The equality on every compact subset gives the distributional equality in \(E\), and hence \(\|D^2S_Ek\|_{L^2(E)}\le C\|k\|_{X_E}\). The gradient is controlled by \eqref{pub:eq-identify-derivative} and the Sobolev inequality. The potential itself obeys a kernel bound: the quantities \(\sup_{\mathbf x\in E}\int_{\partial E}|\mathbf x-\mathbf y|^{-1}dS\) and \(\sup_{\mathbf y\in\partial E}\int_E|\mathbf x-\mathbf y|^{-1}dV\) are finite, so weighted Cauchy--Schwarz controls \(\|S_Ek\|_{L^2(E)}\) by \(\|k\|_{L^2(\partial E)}\). Thus
\begin{equation}
\|S_Ek\|_{H^2(E)}+\|\mathcal L_Ek\|_2\le C\|k\|_{X_E}.
\label{pub:eq-forward}
\end{equation}
Density in \(X_E\), rather than in the strong H\"older norm, extends the operator to all half-order traces.

\subsection{Quantitative linear recovery and the ellipsoid kernel}
\label{pub:linear-recovery}
\begin{proposition}[Linear control transverse to ellipsoids]
\label{pub:prop-linear}
The kernel of \(\mathcal L_E\) is \(\mathcal K_E\), and
\begin{equation}
\operatorname{dist}_{X_E}(k,\mathcal K_E)\le C\|\mathcal L_Ek\|_2.
\label{pub:eq-inverse}
\end{equation}
Together with \eqref{pub:eq-forward}, this implies \(0<c_E\le C_E<\infty\) in \eqref{pub:eq-gains}.
\end{proposition}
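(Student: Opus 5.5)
The plan is to pass from the boundary height \(k\) to the harmonic function \(u=S_Ek\) inside \(E\), and to read both the kernel statement and \eqref{pub:eq-inverse} off \(u\) through the jump relation for the single layer. Since \(S_Ek\) is harmonic in \(E\) and in \(\mathbb R^3\setminus\overline E\), continuous across \(\partial E\), and decays at infinity, the density is recovered as the jump of the normal derivative:
\[
k=\partial_n^-S_Ek-\partial_n^+S_Ek .
\]
Here \(\partial_n^\mp\) are the interior and exterior traces of \(\nabla S_Ek\cdot\mathbf n_E\). Then \(\mathcal L_Ek=0\) means \(D^2u\) is constant in \(E\). Because \(u\) is harmonic, \(u\) then lies in the space \(\mathcal P_2\) of harmonic polynomials of degree at most two, and \(\dim\mathcal P_2=1+3+5=9\).

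\emph{Step 1 (kernel).} First I show \(\mathcal K_E\subset\ker\mathcal L_E\). Differentiate the nine-parameter family of nearby ellipsoids, namely translations \(E+t\mathbf b\) and the quadratic-form family \(\mathbf x^T(\mathbf Q+t\mathbf B)\mathbf x<1\). Their normal velocities at \(t=0\) are exactly the generators in \eqref{pub:eq-tangent-space}. For each member of the family, \(N\) is a quadratic polynomial inside (the classical ellipsoid formula of Section~\ref{pub:background}). By \eqref{pub:eq-identify-derivative} and the energy derivative \eqref{pub:eq-energy-derivative}, the derivative of \(D^2N\) on compact subsets of \(E\) equals \(D^2S_Ek\), up to the term \(\mathbf H_E\mathbf D_k\) coming from the moving evaluation point. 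For these velocities both pieces are constant, so \(S_Ek\in\mathcal P_2\) in \(E\).

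Conversely, the single-layer map \(k\mapsto S_Ek|_E\) is injective on \(H^{-1/2}(\partial E)\supset X_E\). Indeed, if \(S_Ek=0\) in \(E\), its trace vanishes, so the decaying exterior solution also vanishes, and then the jump relation gives \(k=0\). Hence \(\dim\ker\mathcal L_E\le\dim\mathcal P_2=9\). It remains to check that \(\dim\mathcal K_E=9\), i.e. that the nine generators are linearly independent as functions on \(\partial E\). This holds because \(\mathbf b\cdot\mathbf Q\mathbf x-\tfrac12\mathbf x^T\mathbf B\mathbf x\) vanishing on the quadric \(\mathbf x^T\mathbf Q\mathbf x=1\) forces \(\mathbf b=0\) by oddness and then \(\mathbf B=0\). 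So \(\ker\mathcal L_E=\mathcal K_E\). As a byproduct, \(k\mapsto S_Ek|_E\) is a bijection from \(\mathcal K_E\) onto \(\mathcal P_2\), which I will use in Step 2.

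\emph{Step 2 (quantitative recovery).} Given \(k\in X_E\), set \(u=S_Ek\in H^2(E)\); this regularity holds by \eqref{pub:eq-forward}. Choose \(p\in\mathcal P_2\) to match \(u\) in the mean of \(D^2u\), the mean of \(\nabla u\), and the mean of \(u\) over \(E\). The mean of \(D^2u\) is traceless, so it is the Hessian of a harmonic quadratic. The Poincaré inequality on the fixed Lipschitz domain \(E\), applied three times, then gives
\[
\|u-p\|_{H^2(E)}\le C\|\Pi_ED^2u\|_{L^2(E)}=C\,V_E^{1/2}\|\mathcal L_Ek\|_2 .
\]
Let \(k_p\in\mathcal K_E\) be the element with \(S_Ek_p=p\) in \(E\), provided by Step 1, and set \(w=u-p=S_E(k-k_p)\).

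I then bound \(k-k_p\) using both traces of \(w\). On the interior side, \(\|\partial_n^-w\|_{H^{1/2}(\partial E)}\le C\|w\|_{H^2(E)}\) by the trace theorem. On the exterior side, \(S_E(k-k_p)\) restricted to the exterior is the decaying harmonic extension of the Dirichlet datum \(w|_{\partial E}\in H^{3/2}\). By exterior elliptic regularity on the smooth surface \(\partial E\) (alternatively, by mapping to the sphere via \(\mathbf T\) and a Kelvin transform as in Lemma~\ref{pub:lem-extension}), its Dirichlet-to-Neumann map is bounded \(H^{3/2}\to H^{1/2}\), so \(\|\partial_n^+w\|_{H^{1/2}}\le C\|w\|_{H^{3/2}(\partial E)}\le C\|w\|_{H^2(E)}\). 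The jump relation then gives
\[
\operatorname{dist}_{X_E}(k,\mathcal K_E)\le\|k-k_p\|_{X_E}\le C\|\mathcal L_Ek\|_2 ,
\]
which is \eqref{pub:eq-inverse}. Finally, \(c_E>0\) follows because on \(\{Pk=0\}\) the distance to \(\mathcal K_E\) equals \(\|k\|_{X_E}\). The bound \(C_E<\infty\) is \eqref{pub:eq-forward}.

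\emph{Expected obstacle.} The delicate point is justifying the jump relation and the trace identifications for densities that are only in \(H^{1/2}\), together with the bounded exterior Dirichlet-to-Neumann map. I would prove these first for smooth \(k\), where classical potential theory applies, and then pass to \(X_E\) by density, using that every map involved is already bounded on \(X_E\) by \eqref{pub:eq-forward}. A secondary check is that the derivative identification in Step 1 applies to all nine generators. I would settle this by verifying directly that each generator is a smooth height function, since that is all \eqref{pub:eq-identify-derivative} requires.
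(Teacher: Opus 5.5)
Your proposal is correct and is essentially the paper's own proof. It uses the same harmonic quadratic fit $p$ with Poincaré to get $\|u-p\|_{H^2(E)}\le C\|\Pi_ED^2u\|_{L^2(E)}$, the same interior normal trace and exterior Dirichlet-to-Neumann bound $H^{3/2}\to H^{1/2}$ combined through the jump relation for $k-k_p$, and the same count $1+3+5=9$ matched against the nine independent ellipsoid tangents. The only reordering is that you obtain $k_p$ from injectivity plus dimension counting, whereas the paper constructs it directly as $\partial_np-\partial_np^+$ by gluing $p$ to its decaying exterior extension.

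Two small slips need fixing:
\begin{itemize}
\item With the paper's lift $\mathbf V_k$ (as opposed to an affine registration of the ellipsoid family), the term $\mathbf H_E\mathbf D_k$ is not constant. The correct statement is that the derivative of $D^2N_{E_t}$ at a fixed interior point is constant, and that derivative equals $D\mathbf Y_k-\mathbf H_E\mathbf D_k=D^2S_Ek$.
\item The linear map $\mathbf T$ does not preserve harmonicity, so the Kelvin inversion must be applied directly to the exterior of $E$, as the paper does, not after pulling back to the sphere.
\end{itemize}
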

\begin{proof}
Put \(u=S_Ek\in H^2(E)\) and \(\mathbf A=\langle D^2u\rangle_E\), which is trace-free because \(u\) is harmonic. Choose the linear and constant terms of \(p(\mathbf x)=\tfrac12\mathbf x^T\mathbf A\mathbf x+\mathbf b\cdot\mathbf x+c\) so that \(u-p\) and its gradient have zero mean. Two applications of Poincar\'e give
\begin{equation}
\|u-p\|_{H^2(E)}\le C\|\Pi_ED^2u\|_{L^2(E)}.
\label{pub:eq-quadratic-recovery}
\end{equation}
On the fixed smooth boundary, the Dirichlet and normal traces map \(H^2(E)\) into \(H^{3/2}(\partial E)\) and \(H^{1/2}(\partial E)\). If \(\mathcal D_E^+r\) denotes the outward-\(E\) normal derivative of the decaying exterior harmonic extension of \(r\), then
\begin{equation}
\|\mathcal D_E^+r\|_{H^{1/2}(\partial E)}\le C\|r\|_{H^{3/2}(\partial E)}.
\label{pub:eq-exterior-trace}
\end{equation}
For this application, Kelvin inversion maps the exterior of \(E\), with infinity filled in, to a smooth bounded domain. Composition and the smooth Kelvin boundary factor preserve \(H^{3/2}\). Dirichlet regularity at \(p=2\) gives an \(H^2\) solution, and the normal trace gives \eqref{pub:eq-exterior-trace} after transforming back \cite[Theorem~21.20(i)]{Guermond2015}. These regularity inputs concern fixed smooth domains, not the variable \(C^{1,\gamma}\) boundary.

Let \(p^+\) be the decaying exterior extension of \(p|_{\partial E}\), and set \(k_p=\partial_np-\partial_np^+\). Gluing the interior and exterior fields gives the distributional source \(k_p\delta_{\partial E}\) for the negative Laplacian. Uniqueness of the decaying potential implies \(S_Ek_p=p\) in \(E\). The jump convention is \(\partial_n^-S_Ek-\partial_n^+S_Ek=k\), so
\[
k-k_p=\partial_n^-(u-p)-\mathcal D_E^+\bigl((u-p)|_{\partial E}\bigr).
\]
Equations~\eqref{pub:eq-quadratic-recovery}--\eqref{pub:eq-exterior-trace} imply
\(\|k-k_p\|_{X_E}\le C\|\mathcal L_Ek\|_2\). For general \(X_E\) densities the identities follow by smooth approximation using \eqref{pub:eq-forward} and the trace estimates.

A zero response implies that \(D^2S_Ek\) is constant in connected \(E\), so \(S_Ek\) is harmonic quadratic. Conversely every harmonic polynomial of degree at most two is obtained by the preceding construction. The correspondence is injective: zero interior potential gives zero exterior Dirichlet solution and zero density jump. Thus the kernel has dimension \(1+3+5=9\).

Actual neighboring ellipsoids have equations \((\mathbf x-\mathbf c)^T(\mathbf Q+\mathbf B)(\mathbf x-\mathbf c)<1\). Differentiating their boundaries gives \eqref{pub:eq-tangent-space}. Multiplying a zero combination by \(2|\mathbf Q\mathbf x|\), its odd part forces \(\mathbf b=0\), and its even homogeneous quadratic part forces \(\mathbf B=0\). These nine directions are independent and belong to the kernel because each ellipsoid has a quadratic interior potential. They therefore span the entire kernel, consistently with the infinitesimal ellipsoid characterization \cite{Markenscoff1998a,Markenscoff1998b}. The estimate above is now \eqref{pub:eq-inverse}.
\end{proof}
The projection \(P\) is also bounded on \(Z_{E,\gamma}\). Its basis is smooth, and an \(X_E\) pairing against a basis function becomes an \(L^2(S^2)\) pairing against its smooth image under the multiplier \(\ell+1\). Hence its coefficients are bounded by the strong norm.

\subsection{Exact ellipsoid fitting and distance comparison}
\label{pub:fitting}
Substitute \(\mathbf x+g_a(\mathbf x)\mathbf n_E(\mathbf x)\) into the ellipsoid equation above, with parameter \(a=(\mathbf c,\mathbf B)\). At \(a=0\), the height derivative is \(2|\mathbf Q\mathbf x|>0\) uniformly. The parameter-dependent implicit function theorem gives a unique small height \(g_a\), smooth in the parameters and boundary coordinate. Its image is open and closed in the connected ellipsoid boundary, and hence covers the entire boundary. The phase identification in Lemma~\ref{pub:lem-extension} identifies the enclosed domain with the actual ellipsoid.

The derivative of \(a\mapsto Pg_a\) is the isomorphism \eqref{pub:eq-tangent-space}. Reparametrize by \(v=Pg_a\), restrict to a sufficiently small closed parameter ball, and write
\begin{equation}
\begin{gathered}
\mathcal M_E=\{g(v)=v+\psi(v):v\in\mathcal K_E,\ \|v\|_{X_E}\le r_E\},\\
P\psi(v)=0,\qquad \psi(0)=D\psi(0)=0.
\end{gathered}
\label{pub:eq-chart}
\end{equation}
The fixed-radius chart is compact in every finite regularity class and satisfies \(\|D\psi(v)\|_{\mathcal L(\mathcal K_E,X_E)}\le C\|v\|_{X_E}\). Define
\begin{equation}
v_h=Ph,\qquad g_h=g(v_h),\qquad k_h=h-g_h,
\qquad Pk_h=0.
\label{pub:eq-canonical-fit}
\end{equation}
Boundedness of \(P\) and the chart gives \(\|g_h\|_{Z_{E,\gamma}}+|a(v_h)|\le C\delta\).

If \(v_*\) minimizes \eqref{pub:eq-rho}, comparison with \(g(0)=0\) gives \(\|v_*\|_{X_E}\le2\|h\|_{X_E}\); thus every minimizer is interior for small \(h\). Set \(a_0=v_h-v_*\) and \(b_0=k_h+\psi(v_h)-\psi(v_*)\). Orthogonality yields \(\rho_E(h)^2=\|a_0\|_{X_E}^2+\|b_0\|_{X_E}^2\). Since the Lipschitz constant of \(\psi\) on the joining segment is \(O(\delta)\),
\begin{equation}
\rho_E(h)\le\|k_h\|_{X_E}
\le\sqrt{1+C\delta^2}\,\rho_E(h).
\label{pub:eq-fit-comparison}
\end{equation}
The canonical fit need not minimize the distance; \eqref{pub:eq-fit-comparison} is the comparison with all minimizing fits required by the nonlinear theorem.

\subsection{Phase-matched ellipsoid background}
\label{pub:background}
The relative remainder uses a Lipschitz background estimate only for the finite-dimensional family of actual ellipsoids. Put \(\boldsymbol\Gamma=(\mathbf Q+\mathbf B)^{-1}\), \(\mathbf z=\mathbf y-\mathbf c\), and \(D_\Gamma(t)=\sqrt{\det(\boldsymbol\Gamma+t\mathbf I)}\). The classical formula is \cite[Theorem~2.1]{DiFratta2016}
\begin{equation}
N_{E_a}(\mathbf y)=\frac{\sqrt{\det\boldsymbol\Gamma}}4
\int_{\tau(\mathbf y)}^\infty
\frac{1-\mathbf z^T(\boldsymbol\Gamma+t\mathbf I)^{-1}\mathbf z}{D_\Gamma(t)}\,dt,
\label{pub:eq-classical-ellipsoid}
\end{equation}
where \(\tau=0\) inside and \(\mathbf z^T(\boldsymbol\Gamma+\tau\mathbf I)^{-1}\mathbf z=1\) outside. The integrand vanishes at the exterior lower limit. With \(\mathbf w=(\boldsymbol\Gamma+\tau\mathbf I)^{-1}\mathbf z\) and \(\mathbf n_\tau=\mathbf w/|\mathbf w|\), differentiation gives
\begin{equation}
\begin{aligned}
\mathbf H_a^-&=-\frac{\sqrt{\det\boldsymbol\Gamma}}2
\int_0^\infty\frac{(\boldsymbol\Gamma+t\mathbf I)^{-1}}{D_\Gamma(t)}\,dt,\\
\mathbf H_a^+(\mathbf y)&=-\frac{\sqrt{\det\boldsymbol\Gamma}}2
\int_\tau^\infty\frac{(\boldsymbol\Gamma+t\mathbf I)^{-1}}{D_\Gamma(t)}\,dt
+\frac{\sqrt{\det\boldsymbol\Gamma}}{D_\Gamma(\tau)}
\mathbf n_\tau\otimes\mathbf n_\tau.
\end{aligned}
\label{pub:eq-ellipsoid-hessians}
\end{equation}
At the interface, \(\mathbf H_a^+-\mathbf H_a^-=\mathbf n\otimes\mathbf n\). Matrix formulas remain regular at repeated axes. On bounded closed phase regions, the implicit equation for \(\tau\) has derivative \(-|\mathbf w|^2\) bounded away from zero, so the one-sided Hessians depend smoothly on the parameters.

On the fixed reference phases,
\(D\mathbf U_{g_a}=(\mathbf H_a^\pm\circ\boldsymbol\Phi_{g_a})D\boldsymbol\Phi_{g_a}\).
This composition matches phases rather than comparing opposite sides of a moving jump. To control the unbounded exterior, write \(E_a=\mathbf c+\mathbf L_aB\). Outside a fixed ball containing all nearby ellipsoids,
\[
D^2N_{E_a}(\mathbf y)=\det\mathbf L_a\int_B
D^2\frac1{4\pi|\mathbf y-\mathbf c-\mathbf L_a\boldsymbol\zeta|}\,d\boldsymbol\zeta.
\]
Differentiation of this nonsingular integral gives \(|\mathbf H_a|+|\partial_a\mathbf H_a|\le C|\mathbf y|^{-3}\) and \(|\nabla\mathbf H_a|\le C|\mathbf y|^{-4}\). The coordinate maps and their parameter derivatives are bounded deviations from the identity. Combining the compact-region and tail estimates, then integrating the parameter derivative along \(ta\), proves
\begin{equation}
\|D\mathbf U_{g_a}-D\mathbf U_0\|_{L^\infty(\mathbb R^3)}\le C|a|.
\label{pub:eq-background-bound}
\end{equation}
The norm is an essential norm on the matched phases. No \(L^\infty\) continuity claim for arbitrary moving-boundary Hessians is used.

\subsection{Relative nonlinear estimate and completion}
\label{pub:nonlinear}
Take \(g=g_h\), \(k=k_h\), \(\mathbf Z=\mathbf D_h-\mathbf D_g=\mathbf D_k\), and \(\mathbf R=\mathbf U_h-\mathbf U_g-\mathbf Y_k\). The lift and fitting estimates give
\(\max(\|\mathbf D_h\|_\infty,\|\mathbf D_g\|_\infty)\le C\delta\) and \(\|\mathbf Z\|_2\le C\|k\|_{X_E}\). Subtraction of the two equations \eqref{pub:eq-fixed-domain} and the linearized equation gives
\begin{equation}
\begin{gathered}
\int(D\mathbf R\mathbf A_h):D\boldsymbol\varphi
=\int\sum_{j=1}^4\mathbf S_j:D\boldsymbol\varphi,\\
\begin{aligned}
\mathbf S_1&=-\chi_E[\mathbf Q_h-\mathbf Q_g-\mathbf Q_1(\mathbf Z)],\\
\mathbf S_2&=-D\mathbf U_0[\mathbf A_h-\mathbf A_g-\mathbf A_1(\mathbf Z)],\\
\mathbf S_3&=-(D\mathbf U_g-D\mathbf U_0)(\mathbf A_h-\mathbf A_g),\\
\mathbf S_4&=-D\mathbf Y_k(\mathbf A_h-\mathbf I).
\end{aligned}
\end{gathered}
\label{pub:eq-four-sources}
\end{equation}
The first two sources use \eqref{pub:eq-coefficient-remainders}, the third uses \eqref{pub:eq-background-bound} and \(|a(v_h)|\le C\delta\), and the fourth uses \eqref{pub:eq-Y-bound}. Each has \(L^2\) norm at most \(C\delta\|k\|_{X_E}\). Coercivity and testing with \(\mathbf R\) imply
\begin{equation}
\|D\mathbf R\|_2\le C\delta\|k\|_{X_E}.
\label{pub:eq-R-bound}
\end{equation}

Let \(\mathbf B_h=\mathbf F_h^{-1}\) and \(\mathbf H_h=(D^2N_{\Omega_h})\circ\boldsymbol\Phi_h=D\mathbf U_h\mathbf B_h\), and similarly for \(g\). On \(E\), the exact identity
\begin{equation}
\begin{aligned}
\mathbf H_h-\mathbf H_g-(D\mathbf Y_k-\mathbf H_E\mathbf Z)
={}&D\mathbf R\mathbf B_h+D\mathbf Y_k(\mathbf B_h-\mathbf I)\\
&+(D\mathbf U_g-\mathbf H_E)(\mathbf B_h-\mathbf B_g)\\
&+\mathbf H_E(\mathbf B_h-\mathbf B_g+\mathbf Z)
\end{aligned}
\label{pub:eq-hessian-remainder}
\end{equation}
preserves matrix order. Equations~\eqref{pub:eq-coefficient-remainders}, \eqref{pub:eq-Y-bound}, \eqref{pub:eq-background-bound}, and \eqref{pub:eq-R-bound} control its four terms by \(C\delta\|k\|_{X_E}\). Using \eqref{pub:eq-identify-derivative}, and the fact that \(\mathbf H_g\) is constant on the registered interior, gives
\begin{equation}
\begin{gathered}
\mathcal F_E(h)=V_E^{-1/2}\Pi_E\mathbf H_h,\\
\|\mathcal F_E(h)-\mathcal L_Ek_h\|_{L^2(E)}
\le C\delta\|k_h\|_{X_E}
\le C'\delta\rho_E(h).
\end{gathered}
\label{pub:eq-relative-remainder}
\end{equation}
This is main-text Eq.~(7). Retaining \(\|k_h\|_{X_E}\), rather than an absolute \(O(\delta^2)\) error, is essential when the inclusion lies much closer to another ellipsoid than to the original reference.

Since \(Pk_h=0\), \eqref{pub:eq-gains} and \eqref{pub:eq-relative-remainder} bound \(\|\mathcal F_E(h)\|_2\) between \((c_E-C\delta)\|k_h\|_{X_E}\) and \((C_E+C\delta)\|k_h\|_{X_E}\). Physical averaging differs from reference averaging by \(J_h\). If \(j_-\le J_h\le j_+\), minimization over constant tensors gives
\[
j_-\min_{\mathbf M}\int_E|\mathbf H_h-\mathbf M|_F^2
\le\min_{\mathbf M}\int_E|\mathbf H_h-\mathbf M|_F^2J_h
\le j_+\min_{\mathbf M}\int_E|\mathbf H_h-\mathbf M|_F^2.
\]
Consequently,
\begin{equation}
\sqrt{\frac{V_Ej_-}{|\Omega_h|}}\|\mathcal F_E(h)\|_2
\le q(\Omega_h)\le
\sqrt{\frac{V_Ej_+}{|\Omega_h|}}\|\mathcal F_E(h)\|_2.
\label{pub:eq-physical-average}
\end{equation}
Both factors are \(1+O(\delta)\) by \eqref{pub:eq-J-bounds} and \(|\Omega_h|=\int_EJ_h\). Combining \eqref{pub:eq-fit-comparison}, \eqref{pub:eq-relative-remainder}, and \eqref{pub:eq-physical-average} proves \eqref{pub:eq-local-theorem}. A single positive \(\delta_0\) is chosen to satisfy the finitely many embedding, invertibility, chart, and coefficient restrictions and \(A_{E,\gamma}\delta_0<c_E/2\).

For optimality, fix smooth \(k\perp_{X_E}\mathcal K_E\) and take \(h=tk\). Then the canonical fit is \(g(0)=0\), and
\begin{equation}
\lim_{t\to0}\frac{\rho_E(tk)}{|t|}=\|k\|_{X_E},
\qquad
\lim_{t\to0}\frac{q(\Omega_{tk})}{|t|}=\|\mathcal L_Ek\|_2.
\label{pub:eq-directional-limits}
\end{equation}
Smooth transverse directions are dense in \(\mathcal K_E^\perp\). Choosing unit directions approaching the infimum or supremum in \eqref{pub:eq-gains}, then taking \(t\to0\), proves optimality of both limiting gains. No extremizer is assumed for a general ellipsoid. This completes the proof of Theorem~\ref{pub:thm-local}.

\section{Morphological scaling and spherical spectrum}
\label{pub:scale-spectrum}

\subsection{Tangent-plane field and wavelength scaling}
\label{pub:plane}
For the locally flat reference interface \(z=0\), a real height perturbation \(h(x)=a\cos(\kappa x)\) has first-order interior potential
\begin{equation}
N^{(1)}(x,z)=\frac{a}{2\kappa}e^{\kappa z}\cos(\kappa x),
\qquad z<0.
\label{pub:eq-plane-potential}
\end{equation}
Its nonzero Hessian components are
\[
\begin{aligned}
\partial_{xx}N^{(1)}&=-\frac{a\kappa}{2}e^{\kappa z}\cos(\kappa x),&
\partial_{zz}N^{(1)}&=\frac{a\kappa}{2}e^{\kappa z}\cos(\kappa x),\\
\partial_{xz}N^{(1)}&=-\frac{a\kappa}{2}e^{\kappa z}\sin(\kappa x).
\end{aligned}
\]
Counting both mixed entries in the Frobenius norm gives
\begin{equation}
|\boldsymbol\varepsilon^{(1)}|_F^2
=\frac{\alpha^2\beta^2a^2\kappa^2}{2}e^{2\kappa z},
\qquad
\int_{-\infty}^0|\boldsymbol\varepsilon^{(1)}|_F^2\,dz
=\frac{\alpha^2\beta^2a^2\kappa}{4}.
\label{pub:eq-plane-energy}
\end{equation}
This proves main-text Eq.~(9). The Figure~2A fields are \(m\exp(-2\pi m d)\), with \(m=\kappa/\kappa_0\in\{1,3\}\), \(d=-z/\lambda_0\), \(\lambda_0=2\pi/\kappa_0\), and \(\varepsilon_0=|\alpha|\beta a\kappa_0/\sqrt2\). They are evaluated on the reference half-space, not on a finite-amplitude wavy domain.

For the unitary tangential Fourier transform and \(\boldsymbol\xi\ne0\),
\(\widehat{N^{(1)}}=\widehat h(\boldsymbol\xi)e^{|\boldsymbol\xi|z}/(2|\boldsymbol\xi|)\).
The tangential, mixed, and normal Hessian entries contribute in the ratio \(1:2:1\), giving
\begin{equation}
\int_{-\infty}^0|\widehat{D^2N^{(1)}}(\boldsymbol\xi,z)|_F^2\,dz
=\tfrac12|\boldsymbol\xi|\,|\widehat h(\boldsymbol\xi)|^2.
\label{pub:eq-fourier-half-order}
\end{equation}
Thus the integrated first-order field has the homogeneous half-order boundary weight. The real-cosine and complex-Fourier conventions in \eqref{pub:eq-plane-energy} and \eqref{pub:eq-fourier-half-order} are distinct and consistent.

\subsection{Optimal Sobolev order at every fixed ellipsoid}
\label{pub:sharpness}
For any real \(s\), define on smooth heights
\begin{equation}
\|h\|_{X_{E,s}}^2=\frac1{4\pi}\sum_{\ell\ge0}(\ell+1)^{2s}
\|(h\circ\mathbf T)_\ell\|_{L^2(S^2)}^2,
\qquad
\rho_{E,s}(h)=\min_{g\in\mathcal M_E}\|h-g\|_{X_{E,s}}.
\label{pub:eq-general-order}
\end{equation}
\begin{proposition}[Optimal norm order and linear power]
\label{pub:prop-sharpness}
Within arbitrarily small \(Z_{E,\gamma}\) neighborhoods, no frequency-independent upper bound \(q\le C\rho_{E,s}\) holds when \(s<1/2\), and no positive frequency-independent lower bound \(q\ge c\rho_{E,s}\) holds when \(s>1/2\). The power one in the two-sided estimate cannot be replaced by a stronger power on either side.
\end{proposition}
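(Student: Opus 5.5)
The plan is to exhibit, for each fixed ellipsoid \(E\), a family of smooth transverse directions \(k_\ell\) concentrated at angular degree \(\ell\) and to use the directional limits \eqref{pub:eq-directional-limits}. For each such direction, Theorem~\ref{pub:thm-local} gives \(q(\Omega_{tk_\ell})/|t|\to\|\mathcal L_Ek_\ell\|_2\). The same fitting argument as in \eqref{pub:eq-fit-comparison}, applied in \(X_{E,s}\), gives \(\rho_{E,s}(tk_\ell)/|t|\to\operatorname{dist}_{X_{E,s}}(k_\ell,\mathcal K_E)\). Because \(\mathcal K_E\) is finite-dimensional, this remains valid for any real \(s\). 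Choosing \(t=t_\ell\) small enough keeps \(\|t_\ell k_\ell\|_{Z_{E,\gamma}}\le\delta\) and makes both ratios within a factor \(2\) of their limits. The statement therefore reduces to the linear claim
\[
\frac{\|\mathcal L_Ek_\ell\|_2}{\operatorname{dist}_{X_{E,s}}(k_\ell,\mathcal K_E)}\asymp\ell^{1/2-s}.
\]
This ratio tends to \(\infty\) when \(s<1/2\), which excludes any upper bound. It tends to \(0\) when \(s>1/2\), which excludes any lower bound. The same family also handles powers. If a bound \(q\le C\rho^p\) held with \(p>1\), or \(q\ge c\rho^{p}\) with \(p<1\), then letting \(t\to0\) along a single fixed direction would contradict the nonzero linear limits \(0<c_E\le\|\mathcal L_Ek\|_2\le C_E\).

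To build \(k_\ell\), I would take \(f=(h\circ\mathbf T)\) to be a single real spherical harmonic \(Y_\ell\) with \(\ell\ge3\), set \(k_\ell=Y_\ell\circ\mathbf T^{-1}\), and then remove its \(X_E\)-projection onto \(\mathcal K_E\). The basis of \(\mathcal K_E\) is smooth, so its pullback has rapidly decaying harmonic coefficients. The \(X_E\) pairing of \(Y_\ell\) with a basis element is therefore \(O(\ell^{-N})\). It follows that \(\operatorname{dist}_{X_{E,s}}(k_\ell,\mathcal K_E)=\|Y_\ell\|_{X_{E,s}}(1+O(\ell^{-N}))\asymp\ell^{s}\), normalising \(\|Y_\ell\|_{L^2}\) RMS to one.

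The remaining task is the two-sided estimate \(\|\mathcal L_Ek_\ell\|_2\asymp\ell^{1/2}\). The upper bound is \eqref{pub:eq-forward}: \(\|\mathcal L_Ek_\ell\|_2\le C\|k_\ell\|_{X_E}\asymp\ell^{1/2}\). The lower bound follows from Proposition~\ref{pub:prop-linear}, since \(\|\mathcal L_Ek_\ell\|_2\ge C^{-1}\operatorname{dist}_{X_E}(k_\ell,\mathcal K_E)\asymp\ell^{1/2}\). This gives the desired ratio \(\asymp\ell^{1/2-s}\). For the sphere it is explicit through Eq.~(13), and the tangent-plane calculation \eqref{pub:eq-fourier-half-order} explains the exponent.

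The main obstacle is the interplay of the limits: constants in Theorem~\ref{pub:thm-local} involve \(\|h\|_{Z}\), and \(\|k_\ell\|_{Z}\sim\ell^{2+\gamma}\) grows. I would handle this by fixing \(\ell\) first and then choosing \(t_\ell\lesssim\delta\ell^{-2-\gamma}\). With that choice, the nonlinear remainder \eqref{pub:eq-relative-remainder} is at most \(C\delta\|t_\ell k_\ell\|_{X_E}\) relative to the response. This stays a small fraction, because \(c_E\) controls the transverse response uniformly in \(\ell\) and the constant \(C\) in \eqref{pub:eq-relative-remainder} does not depend on \(\ell\). A second point to check is the \(X_{E,s}\) analogue of \eqref{pub:eq-fit-comparison} for \(s\neq1/2\). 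I would argue it directly: the minimiser \(g\in\mathcal M_E\) is \(O(t\ell^{-N})\)-close to \(0\) in every norm, so \(\rho_{E,s}(t_\ell k_\ell)=t_\ell\|k_\ell\|_{X_{E,s}}(1+o(1))\).
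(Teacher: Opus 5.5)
Your proposal is correct and follows essentially the paper's own argument: the projected pulled-back harmonics $k_\ell=(I-P)v_\ell$, rapid decay of the smooth kernel-basis coefficients giving $\operatorname{dist}_{X_{E,s}}(k_\ell,\mathcal K_E)\sim(\ell+1)^s$, the forward estimate together with the transverse recovery estimate giving $\|\mathcal L_Ek_\ell\|_2\asymp\ell^{1/2}$, an $\ell$-dependent choice of $t_\ell$, and one fixed transverse direction for the power claim. Your direct pairing bound for $\rho_{E,s}(t_\ell k_\ell)$, which holds uniformly in $t$, is a slightly more quantitative substitute for the paper's compactness-and-injectivity proof of the distance limit. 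Under RMS normalisation, $\|k_\ell\|_{Z_{E,\gamma}}$ actually grows only like $\ell^{3/2+\gamma}$, but your cruder upper bound $\ell^{2+\gamma}$ suffices for choosing $t_\ell$.
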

\begin{proof}
Take \(L^2\)-normalized real spherical harmonics \(Y_\ell\) with \(\|Y_\ell\|_2^2=4\pi\), set \(v_\ell(\mathbf T\boldsymbol\theta)=Y_\ell(\boldsymbol\theta)\), and project \(k_\ell=(I-P)v_\ell\). The pulled-back basis functions of \(\mathcal K_E\) are smooth. Repeated integration by parts with the sphere Laplacian shows that their degree-\(\ell\) coefficients decay faster than any power. The multipliers defining the projections in any fixed \(X_{E,s}\) do not change this conclusion. Therefore
\begin{equation}
\|k_\ell\|_{X_E}\sim(\ell+1)^{1/2},
\qquad
\operatorname{dist}_{X_{E,s}}(k_\ell,\mathcal K_E)
\sim\|k_\ell\|_{X_{E,s}}\sim(\ell+1)^s.
\label{pub:eq-high-degree}
\end{equation}
Here \(\sim\) means ratio tending to one. Proposition~\ref{pub:prop-linear} consequently gives \(\|\mathcal L_Ek_\ell\|_2\asymp(\ell+1)^{1/2}\). Sphere coordinates are used only to prescribe the perturbations; the physical linear operator remains that of the actual ellipsoid.

For each fixed smooth \(k\), the smooth finite-dimensional chart satisfies
\begin{equation}
\lim_{t\to0}\frac{\rho_{E,s}(tk)}{|t|}
=\operatorname{dist}_{X_{E,s}}(k,\mathcal K_E).
\label{pub:eq-distance-derivative}
\end{equation}
The upper limit follows by comparison with \(g(tw)=tw+O(t^2)\). For the lower limit, a minimizing \(v_t\) satisfies \(\|g(v_t)\|_{X_{E,s}}\le2|t|\|k\|_{X_{E,s}}\). Compactness and injectivity of the chart imply \(v_t\to0\); its injective derivative implies \(v_t=O(t)\). Extracting a subsequence of \(v_t/t\) and applying the chart expansion gives the lower limit in \eqref{pub:eq-distance-derivative}.

For each \(\ell\), choose \(t_\ell>0\) small enough that \(\|t_\ell k_\ell\|_{Z_{E,\gamma}}\le1/\ell\) and the field and distance directional limits hold with relative errors tending to zero. Equations~\eqref{pub:eq-directional-limits}, \eqref{pub:eq-high-degree}, and \eqref{pub:eq-distance-derivative} then yield
\begin{equation}
\frac{q(\Omega_{t_\ell k_\ell})}{\rho_{E,s}(t_\ell k_\ell)}
\asymp\ell^{1/2-s}.
\label{pub:eq-order-obstruction}
\end{equation}
This proves both exclusions in the stated admissible class. Along one fixed nonzero transverse direction, both limits in \eqref{pub:eq-directional-limits} are positive. Hence \(q\le C\rho_E^p\) fails for \(p>1\), and \(q\ge c\rho_E^p\) fails for \(p<1\), as \(t\to0\).
\end{proof}

\subsection{Exact sphere coefficients and the three geometric measures}
\label{pub:sphere}
For \(E=B\), normal graphs are radial and \(\mathcal K_B\) is the sum of degrees \(0,1,2\). Write \(f=\sum_\ell f_\ell\) and \(A_\ell^2=(4\pi)^{-1}\|f_\ell\|_2^2\). The single-layer expansion is
\begin{equation}
S_Bf(r\boldsymbol\theta)=\sum_{\ell\ge0}\frac{r^\ell}{2\ell+1}f_\ell(\boldsymbol\theta),
\qquad r<1.
\label{pub:eq-sphere-layer}
\end{equation}
For \(p_\ell=r^\ell f_\ell\), apply Green's identity to its harmonic Cartesian derivatives. Their homogeneity is \(\ell-1\), while the radial and tangential parts of \(\nabla p_\ell\) on the sphere contribute \(\ell^2\|f_\ell\|_2^2\) and \(\ell(\ell+1)\|f_\ell\|_2^2\). Thus
\begin{equation}
\int_B|D^2p_\ell|_F^2
=(\ell-1)\int_{S^2}|\nabla p_\ell|^2
=\ell(\ell-1)(2\ell+1)\|f_\ell\|_2^2.
\label{pub:eq-harmonic-energy}
\end{equation}
Degrees zero and one have zero Hessian; degree two has constant Hessian, removed by \(\Pi_B\). For \(\ell\ge3\), each Hessian entry is a positive-degree harmonic polynomial and has zero ball mean. Cross-degree terms vanish by harmonic orthogonality, and polarization of \eqref{pub:eq-harmonic-energy} eliminates cross terms between orthogonal modes of the same degree. Since \(|B|=4\pi/3\),
\begin{equation}
q_{\rm lin}^2=\|\mathcal L_Bf\|_2^2
=\sum_{\ell\ge3}\omega_\ell A_\ell^2,
\qquad \omega_\ell=\frac{3\ell(\ell-1)}{2\ell+1}.
\label{pub:eq-sphere-spectrum}
\end{equation}
Because \(\omega_\ell\le\tfrac32(\ell+1)\), the identity extends from finite sums to all \(X_B\) traces. For each fixed smooth direction, \(q(\Omega_{tf})^2=t^2\sum_{\ell\ge3}\omega_\ell A_\ell^2+o(t^2)\).

For a pure degree, the denominators and ratios in Figure~2B are
\begin{equation}
\begin{array}{c|c|c}
\text{measure}&\text{value}&q_{\rm lin}/\text{measure}\\ \hline
\text{height RMS}& A_\ell&\sqrt{\omega_\ell}\\
H^{1/2}&\sqrt{\ell+1}\,A_\ell&\sqrt{\omega_\ell/(\ell+1)}\\
\text{slope RMS}&\sqrt{\ell(\ell+1)}\,A_\ell&\sqrt{\omega_\ell/[\ell(\ell+1)]}
\end{array}
\label{pub:eq-mode-normalizations}
\end{equation}
These are three normalizations of the same response. For mixtures, the squared gain for any positive weights \(w_\ell\) is the weighted average
\[
\frac{\sum_\ell\omega_\ell A_\ell^2}{\sum_\ell w_\ell A_\ell^2}
=\sum_\ell\frac{w_\ell A_\ell^2}{\sum_jw_jA_j^2}\frac{\omega_\ell}{w_\ell}.
\]
The weights \(\omega_\ell\) increase strictly. The half-order ratios also increase, since
\[
\frac{\omega_{\ell+1}}{\ell+2}-\frac{\omega_\ell}{\ell+1}
=\frac{3\ell(5\ell+7)}{(\ell+1)(\ell+2)(2\ell+1)(2\ell+3)}>0.
\]
Their value at \(\ell=3\) is \(9/14\), and their limit is \(3/2\). Consequently,
\begin{equation}
\frac3{\sqrt{14}}\rho_{\rm lin}\le q_{\rm lin}\le\sqrt{\frac32}\rho_{\rm lin},
\qquad \rho_{\rm lin}^2=\sum_{\ell\ge3}(\ell+1)A_\ell^2.
\label{pub:eq-sphere-gains}
\end{equation}
The lower gain is attained by degree three. The upper gain is a supremum approached as \(\ell\to\infty\), not attained by a nonzero finite degree. The slope-normalized weights decrease for \(\ell\ge3\): their successive difference is
\(-6(\ell^2-\ell-3)/[(\ell+1)(\ell+2)(2\ell+1)(2\ell+3)]<0\).
At fixed height, the degree-12 to degree-3 ratio is \(\sqrt{\omega_{12}/\omega_3}=2.4819\ldots\); at fixed slope, degree three maximizes the bulk response.

Equations~\eqref{pub:eq-sphere-spectrum} and \eqref{pub:eq-sphere-gains} establish main-text Eqs.~(13)--(15). For finite perturbations the controlled correction is that of Theorem~\ref{pub:thm-local}. In particular, deleting degree-two radial terms is only tangent-space subtraction; an exact finite ellipsoid is fitted by \eqref{pub:eq-chart}.

\section{Finite-shape bounds and convergence}
\label{pub:finite}

\subsection{Residual-field identity and two-sided estimates}
\label{pub:finite-bounds}
Let \(\Omega\subset F\subset\mathbb R^3\), with \(0<V=|\Omega|\le W=|F|<\infty\), and let \(F\) be an ellipsoid. No small-perturbation assumption is imposed. Define
\begin{equation}
t=V/W,\qquad g=\chi_\Omega-t\chi_F,\qquad
\mathbf G=\operatorname{dev}D^2N_g,\qquad
\mathcal B=\frac{2V}{3}(1-V/W).
\label{pub:eq-residual}
\end{equation}
For measurable \(S\subset\Omega\), \(|S|=v_S>0\), and \(T\subset\Omega^c\), set
\begin{equation}
\begin{gathered}
A_S=\int_S|\mathbf G|_F^2\,dV,\quad \mathbf J_S=\int_S\mathbf G\,dV,
\quad E_T=\int_T|\mathbf G|_F^2\,dV,\\
v_R=V-v_S,\qquad R_E=\mathcal B-A_S-E_T.
\end{gathered}
\label{pub:eq-partial-integrals}
\end{equation}
\begin{theorem}[Finite-shape enclosure]
\label{pub:thm-finite}
The residual field satisfies the exact decomposition
\begin{equation}
\mathcal B=\|\mathbf G\|_{L^2(\mathbb R^3)}^2
=Vq(\Omega)^2+
\frac{\left|\int_\Omega\mathbf G\,dV\right|_F^2}{V}
+\int_{\Omega^c}|\mathbf G|_F^2\,dV.
\label{pub:eq-budget}
\end{equation}
The following bounds hold for every \(S,T\) as above:
\begin{equation}
A_S-\frac{|\mathbf J_S|_F^2}{v_S}
\le Vq(\Omega)^2
\le\mathcal B-E_T-\frac1V
\left(|\mathbf J_S|_F-\sqrt{v_RR_E}\right)_+^2,
\label{pub:eq-two-bounds}
\end{equation}
where \(x_+=\max(0,x)\). The unassigned budget \(R_E\) is nonnegative.
\end{theorem}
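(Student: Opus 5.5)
The plan has three stages: compute the whole-space norm of \(\mathbf G\) by Plancherel, split that norm over \(\Omega\) and \(\Omega^c\) using the ellipsoid property, and then read off the two inequalities from the split.

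\emph{Computing \(\mathcal B\).} The Hessian multiplier of the Newtonian potential is \(-\xi_i\xi_j/|\boldsymbol\xi|^2\), as in Section~\ref{pub:mechanics}. Its deviatoric part has multiplier \(-(\hat{\boldsymbol\xi}\otimes\hat{\boldsymbol\xi}-\tfrac13\mathbf I)\), whose squared Frobenius norm is \(1-\tfrac23+\tfrac13=\tfrac23\) for every \(\boldsymbol\xi\neq0\). Plancherel therefore gives
\[
\|\mathbf G\|_{L^2(\mathbb R^3)}^2=\tfrac23\|g\|_{L^2}^2 .
\]
Since \(\Omega\subset F\), we have \(\chi_\Omega\chi_F=\chi_\Omega\), and so
\[
\|g\|_2^2=V-2tV+t^2W=V(1-t).
\]
This yields \(\mathcal B=\tfrac{2V}{3}(1-V/W)\). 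The volume balance is not strictly needed for this computation. It only makes \(\int g=0\), which is harmless here.

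\emph{The budget split.} Write \(\|\mathbf G\|^2=\int_\Omega|\mathbf G|_F^2+\int_{\Omega^c}|\mathbf G|_F^2\). On \(\Omega\subset F\), the ellipsoid term \(tD^2N_F\) is a constant matrix (Section~\ref{pub:background}). Hence \(\mathbf G=\operatorname{dev}D^2N_\Omega-\mathbf C\) on \(\Omega\) for a constant \(\mathbf C\). Inside \(\Omega\) the trace \(\Delta N_\Omega=-1\) is constant, so \(D^2N_\Omega-\langle D^2N_\Omega\rangle_\Omega=\mathbf G-\langle\mathbf G\rangle_\Omega\). By \eqref{pub:eq-q} this gives
\[
\int_\Omega|\mathbf G-\langle\mathbf G\rangle_\Omega|_F^2=Vq^2 .
\]
The orthogonal decomposition \(\int_\Omega|\mathbf G|^2=\int_\Omega|\mathbf G-\langle\mathbf G\rangle|^2+V|\langle\mathbf G\rangle|^2\) then produces \eqref{pub:eq-budget}. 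The fact that \(R_E\ge0\) follows because \(A_S+E_T\le\int_\Omega|\mathbf G|^2+\int_{\Omega^c}|\mathbf G|^2=\mathcal B\).

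\emph{Lower bound.} A constant tensor minimizes the \(L^2\) distance on any subset, so
\[
Vq^2=\min_{\mathbf M}\int_\Omega|\mathbf G-\mathbf M|^2\ \ge\ \min_{\mathbf M}\int_S|\mathbf G-\mathbf M|^2=A_S-|\mathbf J_S|^2/v_S .
\]

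\emph{Upper bound.} From \eqref{pub:eq-budget} and \(\int_{\Omega^c}|\mathbf G|^2\ge E_T\),
\[
Vq^2\le\mathcal B-E_T-|\mathbf J|^2/V,\qquad \mathbf J=\int_\Omega\mathbf G=\mathbf J_S+\mathbf J_{\Omega\setminus S}.
\]
By Cauchy--Schwarz, \(|\mathbf J_{\Omega\setminus S}|_F\le\sqrt{v_R\int_{\Omega\setminus S}|\mathbf G|^2}\). The main obstacle is controlling this unresolved interior integral. I would use \(\int_{\Omega\setminus S}|\mathbf G|^2\le\mathcal B-A_S-E_T=R_E\), which holds because all three pieces are disjoint parts of the total budget. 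The reverse triangle inequality then gives \(|\mathbf J|_F\ge(|\mathbf J_S|_F-\sqrt{v_RR_E})_+\), and squaring completes \eqref{pub:eq-two-bounds}.

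The only technical care is in the first stage. One must justify Plancherel for \(g\in L^2\) with compact support, and check that \(D^2N_g\) is the \(L^2\) Fourier multiplier image, as already established in Section~\ref{pub:mechanics}. One must also check that the Hessian identities hold a.e. on \(\Omega\), which is routine since \(D^2N_\Omega\in L^2\).
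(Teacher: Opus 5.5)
Your proposal is correct and follows the paper's proof essentially step for step. The shared ingredients are Plancherel with the deviatoric multiplier of squared norm $2/3$, and constancy of the ellipsoid Hessian and of the trace on $\Omega$, which identify $Vq^2$ with the variance of $\mathbf G$ over $\Omega$. The bounds then come from restricting the constant-tensor minimization to $S$, and from Cauchy--Schwarz with $\int_{\Omega\setminus S}|\mathbf G|_F^2\le R_E$ followed by the reverse triangle inequality.

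One small correction to your side remark: the simplification $V-2tV+t^2W=V(1-t)$ does use the balance $tW=V$. For a general $t$ the same argument goes through with $\mathcal B$ replaced by $\tfrac23(V-2tV+t^2W)$, and $t=V/W$ is exactly the choice that minimizes this budget.
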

\begin{proof}
Inside \(F\), \(D^2N_F\) is constant. Hence \(D^2N_g=D^2N_\Omega-tD^2N_F\) on \(\Omega\), and subtraction of the ellipsoidal contribution leaves its variance unchanged. Moreover \(\operatorname{tr}D^2N_g=-(1-t)\) on \(\Omega\), so mean removal acts only on the deviatoric field:
\begin{equation}
Vq^2=\int_\Omega|\mathbf G|_F^2\,dV-
\frac{\left|\int_\Omega\mathbf G\,dV\right|_F^2}{V}.
\label{pub:eq-residual-variance}
\end{equation}
For the unitary Fourier transform,
\(\widehat{D^2N_g}=-(\boldsymbol\xi\otimes\boldsymbol\xi)|\boldsymbol\xi|^{-2}\widehat g\).
The rank-one projector \(\mathbf P_\xi=\boldsymbol\xi\otimes\boldsymbol\xi/|\boldsymbol\xi|^2\) satisfies \(|\mathbf P_\xi-\mathbf I/3|_F^2=2/3\). Plancherel therefore gives
\begin{equation}
\|\mathbf G\|_2^2=\tfrac23\|g\|_2^2,
\qquad
\|g\|_2^2=(1-t)^2V+t^2(W-V)=V(1-V/W).
\label{pub:eq-plancherel-budget}
\end{equation}
Splitting the whole-space norm into the interior and exterior and using \eqref{pub:eq-residual-variance} proves \eqref{pub:eq-budget}.

For the lower estimate, minimization over constants and restriction to \(S\) give
\[
Vq^2=\min_{\mathbf M}\int_\Omega|\mathbf G-\mathbf M|_F^2\,dV
\ge\min_{\mathbf M}\int_S|\mathbf G-\mathbf M|_F^2\,dV
=A_S-|\mathbf J_S|_F^2/v_S.
\]
For the upper estimate, the unassigned budget includes \(\int_{\Omega\setminus S}|\mathbf G|_F^2\). Cauchy--Schwarz on that remainder gives
\[
\left|\int_{\Omega\setminus S}\mathbf G\,dV\right|_F\le\sqrt{v_RR_E},
\qquad
\left|\int_\Omega\mathbf G\,dV\right|_F
\ge\left(|\mathbf J_S|_F-\sqrt{v_RR_E}\right)_+.
\]
Substitute this and \(\int_{\Omega^c}|\mathbf G|_F^2\ge E_T\) into \eqref{pub:eq-budget}. This proves \eqref{pub:eq-two-bounds} and main-text Eqs.~(18)--(21).
\end{proof}
The formulas require only \(\mathbf G\in L^2\), furnished by \eqref{pub:eq-plancherel-budget}. Smoothness and star-shapedness are requirements of the numerical construction in Section~\ref{pub:numerics}, not of this theorem.

\subsection{Convergence under domain exhaustion}
\label{pub:finite-convergence}
Suppose \(S_N\uparrow\Omega\) and \(T_N\uparrow\Omega^c\) up to sets of measure zero, and evaluate the integrals exactly. Let \(\ell_N,u_N\) be the nonnegative square roots of the left and right endpoint expressions in \eqref{pub:eq-two-bounds} after each is divided by \(V\). Because \(\mathbf G\in L^2\),
\[
A_{S_N}\to\int_\Omega|\mathbf G|_F^2,
\quad \mathbf J_{S_N}\to\int_\Omega\mathbf G,
\quad |S_N|\to V,
\quad E_{T_N}\to\int_{\Omega^c}|\mathbf G|_F^2.
\]
Thus \(\ell_N\to q\). For \(S\subset S'\),
\(\min_{\mathbf M}\int_{S'}|\mathbf G-\mathbf M|_F^2\ge\min_{\mathbf M}\int_S|\mathbf G-\mathbf M|_F^2\), so the exact lower functional is nondecreasing. Also \(v_R\to0\) and \(R_E\to0\), whence the mean lower estimate converges to \(|\int_\Omega\mathbf G|_F\) and \(u_N\to q\). The cumulative endpoints
\begin{equation}
L_N=\max_{0\le k\le N}\ell_k,
\qquad U_N=\min_{0\le k\le N}u_k,
\qquad L_N\uparrow q,\quad U_N\downarrow q
\label{pub:eq-monotone-envelopes}
\end{equation}
remain valid and give main-text Eq.~(22). This is a continuum convergence result; it does not prescribe an algebraic rate or bound numerical integration error.

\subsection{Complete far-field accounting on a fixed sphere}
\label{pub:far-field}
Let \(\phi=N_g\) and choose \(B_{R_*}\supset\overline F\). Outside that sphere, \(\Delta\phi=0\) and \(\mathbf G=D^2\phi\). Each \(w_i=\partial_i\phi\) is harmonic. Green's identity on \(B_R\setminus\overline{B}_{R_*}\), followed by \(R\to\infty\), gives
\[
\int_{|\mathbf x|>R_*}|\nabla w_i|^2\,dV
=-\int_{|\mathbf x|=R_*}w_i\partial_n w_i\,dS.
\]
The normal \(\mathbf n\) points out of the sphere, opposite to the exterior-domain normal on its inner boundary. Compact support and \(\int g=0\) give \(\phi=O(|\mathbf x|^{-2})\), \(\nabla\phi=O(|\mathbf x|^{-3})\), and \(D^2\phi=O(|\mathbf x|^{-4})\); the boundary term at infinity vanishes. Summation yields
\begin{equation}
E_{\rm far}(R_*)=\int_{|\mathbf x|>R_*}|\mathbf G|_F^2\,dV
=-\int_{|\mathbf x|=R_*}\nabla\phi\cdot(D^2\phi\,\mathbf n)\,dS.
\label{pub:eq-far-field}
\end{equation}
For an exterior exclusion region \(\Omega\subset\Omega_N^+\subset B_{R_*}\),
\begin{equation}
E_{T_N}=\int_{B_{R_*}\setminus\Omega_N^+}|\mathbf G|_F^2\,dV+E_{\rm far}(R_*),
\qquad T_N=\mathbb R^3\setminus\Omega_N^+.
\label{pub:eq-exterior-split}
\end{equation}
The fixed surface contribution can be reused across levels. This is an exact integration identity, not a truncated physical problem or an imposed boundary condition.

\section{Numerical evaluation and independent FEM comparison}
\label{pub:numerics}

\subsection{Finite-geometry potential evaluation}
\label{pub:potential-numerics}
For a target strictly separated from a smooth interface, integration by parts in the source coordinate gives
\begin{equation}
\partial_jN_\Omega(\mathbf x)=-\frac1{4\pi}\int_{\partial\Omega}
\frac{n_j(\mathbf y)}{|\mathbf x-\mathbf y|}\,dS_{\mathbf y},
\qquad
H_{ij}(\mathbf x)=\frac1{4\pi}\int_{\partial\Omega}
\frac{(x_i-y_i)n_j(\mathbf y)}{|\mathbf x-\mathbf y|^3}\,dS_{\mathbf y}.
\label{pub:eq-surface-field}
\end{equation}
This evaluates the actual finite geometry, not its first variation. The integrand is nonsingular at a separated target but becomes concentrated near the boundary. Symmetry and the phase traces \(\operatorname{tr}\mathbf H=-1\) inside and \(0\) outside are numerical diagnostics. The one-sided jump is \(\mathbf H^+-\mathbf H^-=\mathbf n\otimes\mathbf n\): tangential derivatives of the continuous gradient agree, and the trace jump is one. With \eqref{pub:eq-elastic-reduction}, the traction jump is \(\alpha[3K-\beta(\lambda+2\mu)]\mathbf n=0\).

For a surface of revolution \(\mathbf y=(a(\vartheta)\cos\varphi,a(\vartheta)\sin\varphi,z_s(\vartheta))\), the outward vector area is
\[
\mathbf n\,dS=(-az_s'\cos\varphi,-az_s'\sin\varphi,aa')\,d\vartheta\,d\varphi.
\]
At \(\mathbf x=(p,0,z)\), define \(A=p^2+a^2+(z-z_s)^2\), \(B=2pa\), and
\(I_j=\int_0^{2\pi}\cos^j\varphi\,(A-B\cos\varphi)^{-3/2}d\varphi\), \(j=0,1,2\).
Writing \(d=(p-a)^2+(z-z_s)^2\), \(s=(p+a)^2+(z-z_s)^2\), and \(m=1-d/s\), the ring identities are
\begin{equation}
\begin{gathered}
J=\frac{4K(m)}{\sqrt s},\qquad T=4\sqrt s\,E(m),\qquad
I_0=\frac{4E(m)}{d\sqrt s},\\
I_1=\frac{AI_0-J}{B},\qquad
I_2=\frac{A^2I_0-2AJ+T}{B^2}.
\end{gathered}
\label{pub:eq-ring}
\end{equation}
Here \(K(m),E(m)\) are complete elliptic integrals, not the elastic bulk modulus or reference ellipsoid. For small \(B/A\), cancellation in \(I_1,I_2\) is avoided using
\[
I_j=2\pi A^{-3/2}\sum_{n\ge0}\frac{(3/2)_n}{n!}(B/A)^n c_{n+j},
\qquad c_{2k}=\binom{2k}{k}4^{-k},\quad c_{2k+1}=0.
\]
The meridional integrands for \((H_{rr},H_{\varphi\varphi},H_{zz},H_{rz},H_{zr})\), before division by \(4\pi\), are respectively
\[
\begin{gathered}
(-az_s')(pI_1-aI_2),\quad az_s'a(I_0-I_2),\quad (z-z_s)aa'I_0,\\
aa'(pI_0-aI_1),\quad(z-z_s)(-az_s')I_1.
\end{gathered}
\]
These identities specify the common azimuthal reduction. Meridional and target quadratures require separate refinement, particularly near interfaces.

For an axisymmetric tensor field, the Cartesian volume mean is
\begin{equation}
\langle\mathbf H\rangle_\Omega=\operatorname{diag}(M,M,Z),
\quad M=\left\langle\tfrac12(H_{rr}+H_{\varphi\varphi})\right\rangle_\Omega,
\quad Z=\langle H_{zz}\rangle_\Omega,
\label{pub:eq-H-mean}
\end{equation}
and
\begin{equation}
q^2=\left\langle(H_{rr}-M)^2+(H_{\varphi\varphi}-M)^2+(H_{zz}-Z)^2+2H_{rz}^2\right\rangle_\Omega.
\label{pub:eq-axis-q}
\end{equation}
The two transverse component means are not subtracted separately. For a positive radial shape \(r(\boldsymbol\omega)\), volume quadrature uses \(\mathbf x=s r(\boldsymbol\omega)\boldsymbol\omega\) and \(dV=r^3s^2\,ds\,dS_{\boldsymbol\omega}\). Whole-interior response calculations cover \(0\le s\le1\); the smaller cores in the enclosure calculation have the separate role specified in Section~\ref{pub:dyadic}.

\subsection{Supporting spherical and finite-amplitude checks}
\label{pub:mode-checks}
The finite-amplitude calculations use
\begin{equation}
\begin{gathered}
r(\theta)=s_h\left[1+\sum_\ell\sqrt{2\ell+1}\,A_\ell P_\ell(\cos\theta)\right],\\
s_h=\left[\tfrac12\int_{-1}^1(1+h(\mu))^3\,d\mu\right]^{-1/3}.
\end{gathered}
\label{pub:eq-finite-radial}
\end{equation}
This fixes the volume at \(4\pi/3\); for a zero-mean direction, \(s_h=1+O(\|h\|^2)\), leaving the leading spectrum unchanged. The source integral used the transformation \(\vartheta=\vartheta_0+w\sinh t\), with \(w\) proportional to target separation, to concentrate Gaussian nodes near the kernel. Its small-\(B/A\) ring evaluation retained 16 series terms for \(B/A<0.06\). These settings belong to the supporting finite-amplitude calculation, not to the dyadic enclosure levels below.

Actual ellipsoid fits used the positive radial intersection with \((\mathbf x-\mathbf c)^T\mathbf Q_F(\mathbf x-\mathbf c)=1\):
\[
r_F(\boldsymbol\omega)=\frac{b+\sqrt{b^2-ad}}a,
\quad a=\boldsymbol\omega^T\mathbf Q_F\boldsymbol\omega,
\quad b=\boldsymbol\omega^T\mathbf Q_F\mathbf c,
\quad d=\mathbf c^T\mathbf Q_F\mathbf c-1.
\]
Fits were compared at harmonic cutoffs 36 and 60 and 140 and 200 polar nodes, with selected full nine-parameter fits. This numerical fitting is distinct from the exact local minimum in \eqref{pub:eq-rho}.

The finite-amplitude calculations include exact sphere and spheroid controls, four combined source/volume quadrature levels, separate full-azimuth surface cubature at 48 targets, and amplitude sequences for degrees three, twelve, and their mixture. The largest reported componentwise ring/cubature difference is \(1.92\times10^{-13}\), including targets with relative radial gaps down to \(10^{-4}\). The spheroid mean-Hessian discrepancy is \(7.83\times10^{-15}\). These compare quadrature implementations of \eqref{pub:eq-surface-field}, not independent governing-equation solvers.

As a representative finite-amplitude check, at \(A_3=0.004\) the numerically evaluated gain is
\begin{equation}
\frac{q_{\rm num}}{\rho_{B,\rm num}}
=0.80169815\ldots,
\qquad \frac3{\sqrt{14}}=0.80178373\ldots .
\label{pub:eq-finite-gain-check}
\end{equation}
The finite-geometry value is slightly below the linearized lower endpoint, as permitted by the nonlinear correction. The amplitude sequences approach the spectral prediction, with different signs and orders of correction for odd, even, and mixed directions. They do not evaluate a universal nonlinear constant or certify an admissible finite amplitude.

For a compact quadrature comparison, Table~\ref{pub:tab-mode-quad} shows the source/volume refinement for the two mixed directions \((A_3,A_{12})=(0.004,0.002)\) and \((0.004,0.0008)\). All values use the normalization \eqref{pub:eq-q}. Agreement of successive values indicates numerical stability.
\begin{table}[!htbp]
\centering\small
\caption{Finite-amplitude quadrature comparison. The orders are bulk polar, bulk radial, and source meridional, respectively. Both responses use the normalization \eqref{pub:eq-q}.}
\label{pub:tab-mode-quad}
\begin{tabular}{ccc}
\toprule
Orders & \(q\), \((0.004,0.002)\) & \(q\), \((0.004,0.0008)\)\\
\midrule
32/16/120 & 0.0102192712 & 0.0071598867\\
64/32/200 & 0.0102192712 & 0.0071598867\\
112/56/300 & 0.0102192712 & 0.0071598867\\
160/80/420 & 0.0102192712 & 0.0071598867\\
\bottomrule
\end{tabular}
\end{table}

\subsection{Application geometries and containing ellipsoids}
\label{pub:geometries}
The presentation order is B (protrusion), C (indentation), and D (necked particle), with A the reference spheroid. Let \(R_0=(3V_A/4\pi)^{1/3}\); all coordinates and geometric constants in this subsection are divided by this fixed reference radius. Set
\begin{equation}
\begin{gathered}
a=(3/2)^{2/3}=1.3103706971\ldots,\qquad
b=(3/2)^{-1/3}=0.8735804647\ldots,\\
z_A(r)=a\sqrt{1-r^2/b^2}.
\end{gathered}
\label{pub:eq-reference-geometry}
\end{equation}
Thus \(ab^2=1\) and \(V_A/R_0^3=4\pi/3\). The upper profiles of B and C are
\begin{equation}
\begin{gathered}
z_{B,C}(r)=z_A(r)\pm h\psi(r/w),\qquad h=0.35,\qquad w=0.55b,\\
\psi(t)=\begin{cases}\exp(1-1/(1-t^2)),&0\le t<1,\\0,&t\ge1.\end{cases}
\end{gathered}
\label{pub:eq-feature}
\end{equation}
The opposite surface is unchanged. These are axial height changes over a common footprint, so
\begin{equation}
\frac{V_B-V_A}{R_0^3}=\frac{V_A-V_C}{R_0^3}
=2\pi h w^2\int_0^1t\psi(t)\,dt=0.1024606055\ldots .
\label{pub:eq-paired-volume}
\end{equation}
Consequently \(V_B/R_0^3=4.2912508103\ldots\), \(V_C/R_0^3=4.0863295993\ldots\), and the relative changes are \(\pm2.4461\%\), both relative to \(V_A\).

For D, the connected two-lobe profile is
\begin{equation}
\begin{gathered}
r^2(z)=\sqrt{c^4+4d^2z^2}-d^2-z^2,\qquad
\frac dc=\sqrt{\frac{1+\sqrt{1-(3/5)^2}}2},\\
c=1.3663236441\ldots,\qquad d=1.2962084209\ldots .
\end{gathered}
\label{pub:eq-necked-geometry}
\end{equation}
The scale fixes its volume at \(4\pi/3\). The neck radius is \(r_0=0.4320694736\ldots\), the maximum lobe radius is \(r_1=0.7201157894\ldots\), and \(r_0/r_1=0.60\).

The fixed containing ellipsoids used by the enclosure calculation are
\begin{equation}
\begin{aligned}
F_B:&\quad r^2/0.887^2+(z-0.127)^2/1.535^2\le1,\\
F_C:&\quad r^2/b^2+z^2/a^2\le1,\\
F_D:&\quad r^2/0.8453^2+z^2/2.284^2\le1.
\end{aligned}
\label{pub:eq-containing-ellipsoids}
\end{equation}
They are selected from geometry without FEM input. The containment calculation gives a maximum level bound \(0.9979125674<1\) for B and, after the substitution \(s=z^2\), a quadratic-inequality minimum bounded below by \(0.00262035037>0\) for D. C is contained exactly in the parent spheroid. The parent outlines in Figure~3A specify the paired feature construction; they are not the computational containing ellipsoids in general.

\subsection{Dyadic regions and fixed-region accuracy checks}
\label{pub:dyadic}
These three geometries are star-shaped about the stated origin. For \(N=0,\ldots,4\), use
\begin{equation}
\tau_N=0.1\,2^{-N},\qquad S_N=(1-\tau_N)^{1/3}\Omega,
\quad\Omega_N^+=(1+\tau_N)^{1/3}\Omega,
\quad T_N=\mathbb R^3\setminus\Omega_N^+.
\label{pub:eq-dyadic}
\end{equation}
Both omitted interface layers have volume \(\tau_NV\). The origin, \(\Omega\), and \(F\) remain fixed. A common \(R_*=4R_0\) contains all three \(F\) and \(\Omega_0^+\); \eqref{pub:eq-exterior-split} includes the complete far field, with the fixed surface term reused at all levels. The potential integral is evaluated by the surface/ring representation in Section~\ref{pub:potential-numerics}.

Quadrature is compared at fixed regions. If \(\delta_{\rm quad,N}\) is the larger absolute change of the two squared endpoints between the selected resolutions, the acceptance diagnostic is
\begin{equation}
\frac{\delta_{\rm quad,N}}{u_N^2-\ell_N^2}\le0.05.
\label{pub:eq-quadrature-criterion}
\end{equation}
The maximum recorded ratio is \(5.0646\times10^{-4}\), or \(0.0506\%\) of the squared width; the finest level receives an additional check. At \(\tau=0.025\), shifting the split from \(4R_0\) to \(6R_0\) changes the total exterior integral by at most \(1.12\times10^{-16}\). The shell integral agrees with the difference of the far-field surface terms. The unit-sphere test gives \(0.04363323129985823\), against \(0.04363323129985824\) analytically. Maximum recorded interior/exterior trace defects are \(1.01\times10^{-13}\) and \(8.30\times10^{-14}\). These are recorded floating-point consistency checks, not directed error bounds; the selected-pair diagnostic does not certify either integral.

\subsection{Working intervals and contraction}
\label{pub:intervals}
Table~\ref{pub:tab-intervals} gives the raw accepted endpoints. Lower values are displayed downward and upper values upward to six decimal places. Figure~3C uses the unrounded endpoints evaluated at each refinement level.

\begin{table}[!htbp]
\centering\small
\caption{Working numerical intervals for the common omitted-volume sequence.}
\label{pub:tab-intervals}
\setlength{\tabcolsep}{4pt}
\begin{tabular}{rccc}
\toprule
\(\tau\) (\%) & B: protrusion & C: indentation & D: necked\\
\midrule
10 & [0.045575, 0.208631] & [0.063440, 0.085858] & [0.145941, 0.264029]\\
5 & [0.048941, 0.165658] & [0.069641, 0.082011] & [0.150360, 0.226000]\\
2.5 & [0.050667, 0.129324] & [0.072894, 0.079620] & [0.152524, 0.197672]\\
1.25 & [0.051545, 0.101765] & [0.074564, 0.078144] & [0.153595, 0.178937]\\
0.625 & [0.051989, 0.082406] & [0.075409, 0.077286] & [0.154127, 0.167759]\\
\bottomrule
\end{tabular}
\end{table}

All 15 intervals contain their FEM reference. The first-to-final reductions of \(u_N-\ell_N\) are \(81.3\%\), \(91.6\%\), and \(88.5\%\), in B--C--D order. For the squared-field budget, define
\begin{equation}
W_N^{(2)}=u_N^2-\ell_N^2,
\qquad p_N=\log_2\!\left(W_{N-1}^{(2)}/W_N^{(2)}\right),\quad N\ge1.
\label{pub:eq-observed-contraction}
\end{equation}
The final-step orders are \(0.913,0.932,0.942\). Figure~\ref{pub:fig-contraction} shows the observed contraction, not a proved universal first-order rate. The \(L^2\) exhaustion proof in Section~\ref{pub:finite-convergence} supplies convergence without a rate; an algebraic rate additionally requires quantitative control of field concentration in the omitted layers.
\begin{figure}[!htb]
\centering
\includegraphics[width=0.92\linewidth,keepaspectratio]{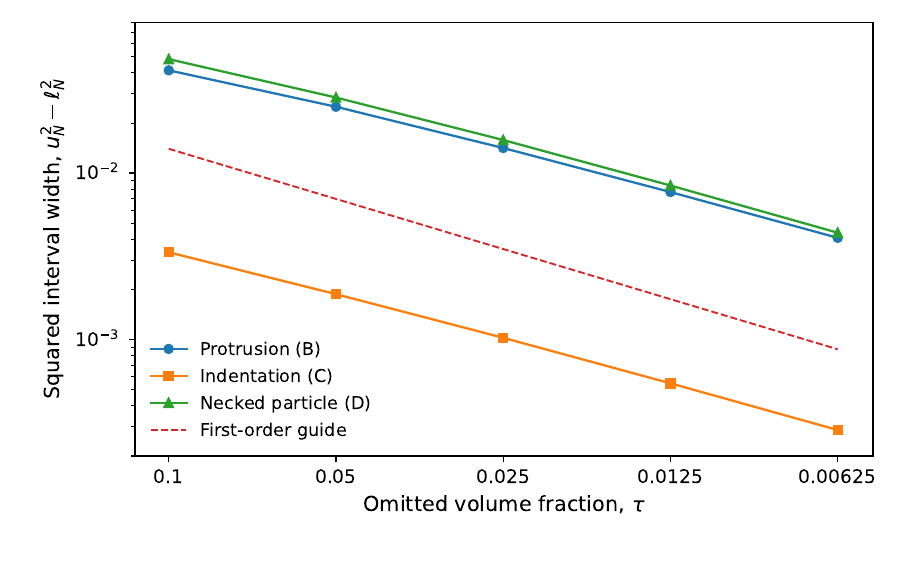}
\caption{Contraction of the squared interval width under the dyadic omitted-volume rule. Curves use the unrounded endpoints for protrusion, indentation, and necked particle. The first-order reference is not a fit. Quadrature sensitivity is evaluated separately at fixed regions.}
\label{pub:fig-contraction}
\end{figure}

\subsection{Independent axisymmetric finite-element formulation}
\label{pub:fem}
The FEM solves displacement equilibrium directly, without the potential operator or interval endpoints. The FEM parameters are \(\lambda=\mu=1\), \(\alpha=10^{-3}\), and \(\beta=5/3\). The full positive-\(r\), positive/negative-\(z\) meridian is retained:
\[
\begin{gathered}
D_L=\{(r,z):r\ge0,\ r^2+z^2<L^2\},\\
u_r=u_z=0\text{ on }r^2+z^2=L^2,\qquad
u_r=0\text{ on }r=0.
\end{gathered}
\]
Here \(u_z\) remains free on the rotation axis. The strain components and weak equation are
\begin{equation}
\begin{gathered}
\varepsilon_{rr}=\partial_ru_r,\qquad \varepsilon_{zz}=\partial_zu_z,
\qquad\varepsilon_{\theta\theta}=u_r/r,
\qquad\varepsilon_{rz}=\tfrac12(\partial_zu_r+\partial_ru_z),\\
\int_{D_L}\mathbb C\bigl(\boldsymbol\varepsilon(\mathbf u)-\alpha\mathbf I\chi_\Omega\bigr):
\boldsymbol\varepsilon(\mathbf v)\,2\pi r\,dr\,dz=0.
\end{gathered}
\label{pub:eq-fem-weak}
\end{equation}
Six-node quadratic isoparametric triangles represent geometry and displacement. Curved elements fit the analytic inclusion meridian and outer sphere. The exterior mesh grows geometrically. Domain studies preserve the mesh through \(3R_0\). The mesh sequence uses 32/64/128 angular subdivisions and 6/12/24 interior radial layers; C adds a 256/48 level because of its rim field. Element assembly checks sampled Jacobian positivity.

Statistics use quadrature gradients, not smoothed nodal fields. The Cartesian mean is \(\operatorname{diag}(m_t,m_t,m_z)\), where \(m_t=\langle(\varepsilon_{rr}+\varepsilon_{\theta\theta})/2\rangle_\Omega\) and \(m_z=\langle\varepsilon_{zz}\rangle_\Omega\). Then
\begin{equation}
q_{\rm FEM}^2=\frac{\left\langle(\varepsilon_{rr}-m_t)^2+(\varepsilon_{\theta\theta}-m_t)^2+
(\varepsilon_{zz}-m_z)^2+2\varepsilon_{rz}^2\right\rangle_\Omega}{\alpha^2\beta^2}.
\label{pub:eq-fem-observable}
\end{equation}
Figure~3B shows the element RMS of the corresponding local normalized fluctuation.

\subsection{FEM benchmarks, domain and mesh checks}
\label{pub:fem-checks}
The outer radius is varied through \(L/R_0=8,16,24\). For a unit spherical inclusion in a concentric clamped sphere, the exact finite-domain interior strain is
\[
\boldsymbol\varepsilon_{\rm in}=\frac{\alpha\beta}{3}(1-L^{-3})\mathbf I
\]
in radius-normalized coordinates. At \(L=24R_0\), the sphere and reference-spheroid full-interior tensor errors relative to their infinite-domain solutions are \(0.00910\%\) and \(0.00834\%\). The sphere error relative to the finite-clamped solution is \(0.00481\%\). Their residual heterogeneities, \(2.76\times10^{-5}\) and \(2.10\times10^{-5}\), provide numerical-floor controls.

On the common 128/24 mesh, increasing \(L\) from \(16R_0\) to \(24R_0\) changes \(q\) by \(1.04\times10^{-5}\%\), \(1.16\times10^{-6}\%\), and \(4.86\times10^{-6}\%\) for B, C, and D. The selected results are given in Table~\ref{pub:tab-fem}.
\begin{table}[!htbp]
\centering\small
\caption{Independent FEM responses and observed final-mesh sensitivity.}
\label{pub:tab-fem}
\begin{tabular}{lrrr}
\toprule
Case & \(q_{\rm FEM}\) & mesh change & DOFs\\
\midrule
B: protrusion & 0.0524349864 & 0.0151\% & 52,174\\
C: indentation & 0.0762543627 & 0.0595\% & 161,598\\
D: necked & 0.1546568936 & 0.0025\% & 52,174\\
\bottomrule
\end{tabular}
\end{table}
The C value uses the 256/48 mesh. A 4-by-4 to 6-by-6 Gaussian-product quadrature check on C changes \(q\) by approximately \(6.7\times10^{-10}\) relatively. Additional checks verify linear scaling under doubling/reversing \(\alpha\), recovery of stored variances from quadrature fields, and stiffness-work closure at approximately \(10^{-13}\) or better. Exact potential solutions enter only after the displacement solve for benchmarking. The same independent FEM responses and field maps are used throughout the dyadic refinement. These sensitivities are not rigorous continuum FEM error intervals.

\section{Research-process documentation and reproducibility}
\label{pub:process}

\subsection{Scope of AI review and partial formalization}
\label{pub:review-scope}
Main-text Section~3.2 distinguishes the researchers' scientific decisions from AI-assisted formulation, derivation, computation, and review. The staged AI review examined a local stability estimate for smooth perturbations of a fixed ellipsoid. The AI reviewer first received the definitions, operator, registration, and target estimate, and reconstructed an argument before examining the candidate derivation. The subsequent AI review identified no substantive gap within that scope. This provided an additional analytical check; separate AI review contexts do not ensure independent model errors. The \(C^{1,\gamma}\) two-sided theorem is established by the written proof in Sections~\ref{pub:setting}--\ref{pub:scale-spectrum}.

Partial Lean formalization established specified mathematical components rather than the full local theorem. Table~\ref{pub:tab-formal-scope} summarizes the checked statements and the aspects of the analytical argument that they do not establish.
\begin{table}[!htbp]
\centering\small
\caption{Scope of selected formalization results.}
\label{pub:tab-formal-scope}
\begin{tabular}{@{}>{\raggedright\arraybackslash}p{0.23\linewidth}>{\raggedright\arraybackslash}p{0.36\linewidth}>{\raggedright\arraybackslash}p{0.34\linewidth}@{}}
\toprule
Component & Checked statement & Not established by that component\\
\midrule
Sobolev realization & A continuous linear reconstruction takes the completion of smooth gradients to an \(L^6\) vector field with the specified weak gradient and Sobolev bound. & Identification with the actual registered moving-domain potential; the transverse inverse estimate.\\
Ellipsoid tangents & Differentiation of the quadratic level constraint, injectivity of the normal-velocity parametrization, and a nine-dimensional algebraic range. & Boundedness in \(X_E\), identification with the full response kernel, or the nonlinear stability theorem.\\
\bottomrule
\end{tabular}
\end{table}

\subsection{Separation of analytical and computational evidence}
\label{pub:evidence}
The local and finite-shape theorems are analytical statements with the assumptions given in Sections~\ref{pub:setting}--\ref{pub:finite}. The tangent-plane and sphere calculations establish their respective first-order formulas. The supporting finite-amplitude calculations use the same potential representation on actual geometries. Their departure from the sharp linearized lower gain, \eqref{pub:eq-finite-gain-check}, distinguishes that gain from a finite-amplitude bound. Comparing ring and full-azimuth cubatures tests the numerical evaluation of the shared potential representation.

For the dyadic enclosure study, the theoretical endpoints were evaluated without using FEM results. Displacement-based FEM provides an independent comparison by solving the equilibrium equations for the same prescribed geometries.

\EndArxivSupplement
\end{document}